\documentclass[11pt]{article}

\usepackage[T1]{fontenc}
\usepackage[utf8]{inputenc}
\usepackage{amsmath,amssymb,amsthm,mathtools}
\usepackage{enumitem}
\usepackage{microtype}
\usepackage[margin=1in]{geometry}
\usepackage{xurl}
\usepackage[hidelinks]{hyperref}

\newcommand{\F}{\mathbb F}
\newcommand{\Z}{\mathbb Z}

\newcommand{\R}{\mathbb R}

\newcommand{\ThreeSAT}{\mathrm{3SAT}}
\newcommand{\GapSVP}{\mathrm{GapSVP}}
\newcommand{\RS}{\mathrm{RS}}
\newcommand{\cC}{\mathcal C}

\newcommand{\cL}{\mathcal L}

\newcommand{\wt}{\operatorname{wt}}

\newcommand{\eps}{\varepsilon}

\newtheorem{theorem}{Theorem}[section]
\newtheorem{lemma}[theorem]{Lemma}
\newtheorem{proposition}[theorem]{Proposition}
\newtheorem{corollary}[theorem]{Corollary}

\theoremstyle{remark}
\newtheorem{remark}[theorem]{Remark}

\setlist[itemize]{leftmargin=2em}
\setlist[enumerate]{leftmargin=2.3em}
\allowdisplaybreaks

\hypersetup{
  pdftitle={Polynomial-Factor Deterministic NP-Hardness for SVP in Every lp Norm with p > 2},
  pdfauthor={}
}

\title{Polynomial-Factor Deterministic NP-Hardness\\
for SVP in Every $\ell_p$ Norm with $p>2$}
\author{Isaac M Hair \thanks{\texttt{isaacmhair@gmail.com}} \\ UCSB, UCLA \and Amit Sahai \thanks{\texttt{sahai@cs.ucla.edu}} \\ UCLA}
\date{}

\begin{document}
\maketitle

\begin{abstract}
For every constant $2<p<\infty$ and every constant
\[
  0<\varepsilon<
  \min\left\{\frac{p-2}{4p},\frac18\right\},
\]
we show that the $\ell_p$-shortest vector problem for lattices of
rank $M$ is NP hard to approximate within a factor of $M^\varepsilon$, via a deterministic reduction. For
$p=\infty$, the same holds for every constant $0<\varepsilon<1/8$.  The
reduction builds on the polynomial-gap CVP construction of
OpenAI~\cite{openai-ten-advances} and the direct reduction to SVP for $p>2$
of Hair and Sahai~\cite{hair-sahai-2511}.
\end{abstract}

\section{Introduction}\label{sec:introduction}

A lattice is the set of all integral linear combinations of a finite collection of
linearly independent vectors.  Lattices are classical objects with important
applications in algorithmic number theory~\cite{lenstra1982factoring},
integer programming
\cite{lenstra1983integer,kannan1987minkowski,frank1987application,
schrijver1998theory}, coding theory
\cite{forney1988coset,de1989some,zamir2014lattice}, and post-quantum
cryptography
\cite{ajtai1996generating,nguyen2001two,micciancio2007worst,
regev2009lattices,peikert2016decade}.  Many lattice-based cryptosystems are
supported by reductions from worst-case lattice problems, among which the
shortest vector problem is one of the most fundamental.

A full-row-rank matrix
$\mathbf B\in\mathbb Z^{M\times m_B}$ generates the lattice
\[
  \mathcal L(\mathbf B)
  =\{\mathbf u\mathbf B:\mathbf u\in\mathbb Z^M\}
  \subseteq\mathbb R^{m_B}.
\]
The shortest vector problem (SVP) in the $\ell_p$ norm asks for a nonzero vector
of minimum norm in $\mathcal L(\mathbf B)$.  Its promise version,
$\gamma$-$\GapSVP_p$, asks one to distinguish
\[
  \lambda_1^{(p)}(\mathcal L(\mathbf B))\le r
  \qquad\text{from}\qquad
  \lambda_1^{(p)}(\mathcal L(\mathbf B))>\gamma r,
\]
where $r>0$ is part of the input and
$\lambda_1^{(p)}(\mathcal L(\mathbf B))$ denotes the $\ell_p$ norm of a shortest nonzero
lattice vector.

There is a broad literature on algorithms for GapSVP in different norms
\cite{kannan1987minkowski,ajtai2002sampling,blomer2009sampling,
eisenbrand2020approximate}, hardness
\cite{micciancio2001shortest,haviv2007tensor,aggarwal2018gap,
bhattiprolu2025inapproximability}, and reductions relating the problems
across norms~\cite{regev2006lattice,aggarwal2021dimension}.

\paragraph{A Brief History of SVP Hardness.}
In a foundational paper on the shortest vector problem, van Emde Boas proved NP-hardness of exact SVP in the $\ell_\infty$ norm and
asked whether the same was true in the Euclidean norm
\cite{van1981another}.  Nearly two decades later, Ajtai proved Euclidean SVP
hard under a randomized reduction, showing that a polynomial-time algorithm
would imply NP $\subseteq$ RP~\cite{ajtai1998shortest}.
Subsequent randomized reductions steadily strengthened this gap.  Cai and
Nerurkar obtained inverse-polynomial hardness, Micciancio obtained a constant
factor, and Khot ultimately showed that, for every fixed $p>1$, GapSVP$_p$ is
hard to approximate within any constant factor unless NP $\subseteq$
RP~\cite{cai1998approximating,micciancio2001shortest,khot2003hardness,
khot2005hardness}.  Under stronger assumptions against randomized
quasipolynomial-time algorithms, Khot and later Haviv and Regev obtained
almost-polynomial factors, culminating in
$2^{(\log M)^{1-\varepsilon}}$ for every fixed $1\le p<\infty$ and every
$\varepsilon>0$~\cite{khot2005hardness,haviv2007tensor}.  These finite-norm reductions rely on locally dense
lattices, together with randomized reductions from the already-hard closest
vector problem~\cite{arora1997hardness}, and removing this randomness remained
a major obstacle
\cite{micciancio2012inapproximability,micciancio2014locally,
bennett2022hardness,bennett2023complexity}.  By contrast, in $\ell_\infty$,
Dinur obtained a deterministic polynomial-time reduction with a nearly polynomial
approximation factor~\cite{dinur2002approximating}.  Under the Projection
Games Conjecture, Mukhopadhyay later obtained polynomial-factor hardness in
this norm~\cite{mukhopadhyay2022projection}.

This picture changed only recently.  Hair and Sahai proved nearly
polynomial-factor NP-hardness of approximation for GapSVP$_p$ for all $p > 2$
\cite{hair-sahai-2511}, while Hecht and Safra independently obtained related
deterministic hardness results~\cite{hecht-safra-2025}.  Hair and Sahai
subsequently showed nearly polynomial hardness of approximation for every finite norm using a deterministic
subexponential-time reduction under a corresponding subexponential-time
assumption
\cite{hair-sahai-2604}.  Using different techniques, Wan proved deterministic
NP-hardness of $\gamma$-GapSVP$_p$ for every finite $p$ and every
$1\leq \gamma<2^{1/p}$~\cite{wan-2603}, thereby settling van Emde Boas's
Euclidean question in the affirmative.  He subsequently extended the Euclidean NP-hardness result to every
constant approximation factor $\gamma>1$~\cite{wan-2608}.

\paragraph{Our Result.}  In this work, we obtain
polynomial-factor hardness of approximation for every $p>2$ under a deterministic
polynomial-time reduction.

\begin{theorem}[Main theorem]\label{thm:main-introduction}
For all constants $2<p<\infty$ and
\[
  0<\varepsilon<
  \min\left\{\frac{p-2}{4p},\frac18\right\},
\]
$M^\varepsilon$-$\GapSVP_p$ on lattices of rank $M$ is NP-hard under a
deterministic polynomial-time reduction.  For $p=\infty$, the same holds for
every constant $0<\varepsilon<1/8$.
\end{theorem}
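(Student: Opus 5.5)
The plan is to compose two ingredients the paper already points to: a deterministic polynomial-gap CVP hardness result in the style of OpenAI~\cite{openai-ten-advances}, and the direct CVP-to-SVP embedding of Hair and Sahai~\cite{hair-sahai-2511}, which works for $p>2$.

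\textbf{Step 1: the CVP instance.} Starting from $\ThreeSAT$ with a polynomial gap, we build a CVP instance $(\mathbf A,\mathbf t)$ over $\Z$ with a strong structural promise.
\begin{itemize}
\item In the YES case there is $\mathbf u$ with $\mathbf u\mathbf A-\mathbf t\in\{0,\pm1\}^n$ of Hamming weight at most $d$.
\item In the NO case every $\mathbf u$ and every nonzero integer multiple $k\mathbf t$ give $\mathbf u\mathbf A-k\mathbf t$ of weight at least $n^{c}d$, where $c$ is a fixed constant.
\end{itemize}
The code-based construction with Reed--Solomon-like codes ($\RS$) should give such a gap, and also a guarantee that the coset of $\mathbf t$ differs from the lattice in many coordinates.

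\textbf{Step 2: the SVP lattice.} Following~\cite{hair-sahai-2511}, we form the basis
\[
\begin{pmatrix}\mathbf A & \mathbf 0\\ \mathbf t & s\end{pmatrix},
\]
augmented with a gadget lattice $\cC$. The gadget is a lattice generated by a linear code of large minimum distance, scaled so that it is locally dense in $\ell_p$ only in a weak sense that suffices when $p>2$. For $p>2$ a vector with $w$ entries of size $1$ has $\ell_p$ norm $w^{1/p}$. Meanwhile vectors with larger entries but small support are penalized, by comparison with $\ell_2$, at rate $\|\cdot\|_p\ge \|\cdot\|_2\, w^{1/p-1/2}$. This norm-comparison gap is what lets us rule out short "parasitic" vectors without randomness.

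\textbf{Step 3: the two analyses.}
\begin{itemize}
\item \emph{YES case.} The vector $(\mathbf u\mathbf A-\mathbf t,\,-s)$ has $\ell_p$ norm at most $(d+s^p)^{1/p}$.
\item \emph{NO case.} Write a nonzero lattice vector as $(\mathbf u\mathbf A-k\mathbf t,\,-ks)$.
  \begin{itemize}
  \item If $k\neq0$, the CVP gap gives norm at least $(n^c d)^{1/p}$.
  \item If $k=0$, we need $\lambda_1^{(p)}(\cL(\mathbf A))$ to be large.
  \end{itemize}
\end{itemize}
The $k=0$ case is the main obstacle. In the OpenAI construction, $\cL(\mathbf A)$ may contain short vectors, typically multiples of $q$ times unit vectors or low-weight integer combinations. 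I would handle this by tensoring or concatenating with an integer lattice obtained from a code $\cC$ of relative distance $\Omega(1)$ over a large prime field $\F_q$. This forces every nonzero vector of $\cL(\mathbf A)$ to have either Hamming weight at least $n^{c'}$ or some entry at least $q$.

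With the parameter choices $q^p\gg n^{c}d$ and $s$ tuned, both branches give norm at least $\gamma\,(d+s^p)^{1/p}$ with $\gamma=M^{\eps}$. Here the $\eps$ bound $(p-2)/(4p)$ should arise from balancing the weight gap $n^{c/p}$ against the $\ell_2$-to-$\ell_p$ loss $w^{1/2-1/p}$. The $1/8$ cap reflects the CVP gap exponent $c$ together with the polynomial blowup relating $n$ to the rank $M$.

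\textbf{Step 4: bookkeeping and $p=\infty$.} We check that the rank $M$ is polynomial in $n$, so that $n^{\Theta(1)}=M^{\eps}$ for the stated $\eps$ range. For $p=\infty$ the same construction works, but entry-size arguments replace weight arguments: short vectors must lie in $\{0,\pm1\}$, and the CVP gap is then applied directly. This loses only the $1/8$ constraint.
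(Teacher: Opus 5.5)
Your proposal has a genuine gap, and it sits exactly at the step you flag as the main obstacle. Step~1 assumes a NO-case weight bound for \emph{every} nonzero multiple $k\mathbf t$, although a CVP gap only controls $k=\pm1$. Your $k=0$ fix asserts that ``tensoring or concatenating'' with a good code makes every nonzero vector of $\cL(\mathbf A)$ have weight at least $n^{c'}$ or an entry at least $q$. Neither statement is constructed, and the second fails as described. Appending coordinates $\mathbf u\mathbf G$ turns the YES witness into $(\mathbf u\mathbf A-\mathbf t,\mathbf u\mathbf G-\mathbf t',-s)$, which is short only if $\mathbf t'$ is close to $\mathbf u\mathbf G$ for the unknown satisfying $\mathbf u$. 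That is precisely a locally dense gadget, the object whose deterministic construction was the long-standing obstacle. Tensoring does not help either, since $\mathbf a\otimes\mathbf v$ stays short whenever $\mathbf a\in\cL(\mathbf A)$ is.

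A sanity check confirms that something is wrong: your Step~3 never uses $p>2$. If these assumptions held, the same computation would give deterministic polynomial-factor hardness in $\ell_2$ and $\ell_1$, far beyond what is known.

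The paper has no target vector at all. The code $\cC_\varphi$ is homogeneous, and the honest codeword $\mathbf x^\sigma$ is itself the short vector. The heart of the proof is Proposition~\ref{prop:sparse-dichotomy}, a decoding statement for \emph{arbitrary} nonzero codewords with arbitrary field weights: if $\varphi$ is unsatisfiable and $\wt(\mathbf u)\le S_0$, then every nonempty fiber of $\mathbf u$ has at least $T+2$ nonzeros. Its proof uses weighted Hankel/Prony reconstruction over $\F_q(X)$, polynomial-state stabilization giving $G_{(C,\beta)}\equiv(Z-\beta_i)^{h}\pmod{X-a_i}$, and the reduced Cauchy-fraction identity $G_\star=\prod_\beta G_{(C,\beta)}$. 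Nothing in your plan plays this role.

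Note also that this dichotomy is \emph{not} a Hamming-weight gap, and no such gap holds. In either case, $\cC_\varphi$ contains the following codeword: at a single point $\xi$, take a vector with $T+2$ nonzero entries whose moments of orders $0,\dots,T$ all vanish (the kernel of a $(T+1)\times(T+2)$ Vandermonde matrix). Place it in the global fiber and in one subtype fiber per clause. Every moment polynomial is then zero, and the weight $(m+1)(T+2)$ is far below $R$.

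Such vectors are defeated only by the fiberwise partial Hadamard map $\mathbf E=\mathbf I_g\otimes\mathbf U$ of Lemma~\ref{lem:hadamard-energy}. The honest vector maps to one $\pm1$ row per nonempty fiber, with norm $(RL)^{1/p}$. Meanwhile $\|\mathbf v\mathbf U\|_p\ge L^{1/p}\|\mathbf v\|_2$ charges any dense fiber at least $L^{1/p}\sqrt{T+2}$. This exceeds the radius only when $p>2$, and it yields the $\vartheta/2=(p-2)/(4p)$ term. The all-block bound on the large-support branch gives $M^{\delta/2}$ with $\delta\to1/4$, which is the $1/8$ term.

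Without this conversion your exponent accounting also fails. A weight gap $W$ against $\{0,\pm1\}$-valued vectors guarantees only an $\ell_p$ ratio of about $W^{1/p}$, an exponent decaying like $1/p$ rather than tending to $1/8$. At $p=\infty$ it guarantees nothing, since nonzero $\{0,\pm1\}$ vectors of any weight have $\ell_\infty$ norm $1$. So Step~4's ``apply the CVP gap directly'' breaks; the paper's endpoint again relies on $\|\mathbf v\mathbf U\|_\infty\ge\|\mathbf v\|_2$.
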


\section*{Acknowledgements}

The main technical ideas in this work were generated by Codex using ChatGPT 5.6 Sol Ultra and harness components from the UCLA Moonshot Harness~\cite{MoonshotHarness}, in response to a query asking Codex to combine the polynomial moment ideas from the OpenAI CVP result~\cite{openai-ten-advances} with the Hadamard gadget ideas from a result by Hair and Sahai on SVP~\cite{hair-sahai-2511}. The human authors carefully verified and refined all proof components, adding motivation and exposition to communicate the arguments clearly. The human authors assume full responsibility for every claim, proof, and citation contained in this paper.

This research was supported in part by a Laude Moonshot seed award, a Simons Investigator Award, a DARPA expMath award, NTT Research, NSF grant 2333935, BSF grant 2022370, a Xerox Faculty Research Award, a Google Faculty Research Award, an Okawa Foundation Research Grant, and the Symantec Chair of Computer Science.

\section{Preliminaries}\label{sec:preliminaries}
For a positive integer $k$, let $[k]=\{1,\ldots,k\}$.  All vectors are row
vectors unless stated otherwise.  For a vector $\mathbf u$ over any field, its
support and Hamming weight are
\[
  \operatorname{supp}(\mathbf u)=\{j:\mathbf u_j\ne0\},
  \qquad
  \wt(\mathbf u)=|\operatorname{supp}(\mathbf u)|.
\]

\begin{lemma}[Norm comparison]\label{lem:norm-comparison}
Let $1\le r\le t\le\infty$, and let $\mathbf y\in\mathbb R^k$.  Then
\[
  \|\mathbf y\|_t\le \|\mathbf y\|_r
  \le k^{1/r-1/t}\|\mathbf y\|_t,
\]
where $1/\infty=0$.
\end{lemma}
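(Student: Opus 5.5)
The two inequalities are proved separately. Throughout, the case $\mathbf y=\mathbf 0$ is trivial, and $\mathbf y$ may be replaced by $(|y_1|,\dots,|y_k|)$ since both norms depend only on absolute values.

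For the left inequality $\|\mathbf y\|_t\le\|\mathbf y\|_r$, first normalize by homogeneity so that $\|\mathbf y\|_r=1$. Then $|y_j|\le 1$ for every $j$.
\begin{itemize}
\item If $t=\infty$, the claim is $\max_j|y_j|\le 1$, which is immediate.
\item If $t<\infty$, then $|y_j|^t\le |y_j|^r$ because $|y_j|\le1$ and $t\ge r$. Summing over $j$ gives $\|\mathbf y\|_t^t\le \|\mathbf y\|_r^r=1$, hence $\|\mathbf y\|_t\le 1=\|\mathbf y\|_r$.
\end{itemize}
Note that if $r=\infty$ then $t=\infty$, and both inequalities are equalities.

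For the right inequality, the main tool is H\"older's inequality.
\begin{itemize}
\item If $t=\infty$ (and $r<\infty$), bound each term: $\sum_j|y_j|^r\le k\,\|\mathbf y\|_\infty^r$. Taking $r$-th roots gives $\|\mathbf y\|_r\le k^{1/r}\|\mathbf y\|_\infty$.
\item If $r\le t<\infty$, apply H\"older to the sum $\sum_j |y_j|^r\cdot 1$ with conjugate exponents $t/r$ and $(1-r/t)^{-1}$. This yields
\[
\sum_j|y_j|^r\le\Bigl(\sum_j|y_j|^t\Bigr)^{r/t}k^{1-r/t}.
\]
Taking $r$-th roots gives $\|\mathbf y\|_r\le k^{1/r-1/t}\|\mathbf y\|_t$.
\end{itemize}
The edge case $r=t$ is trivial: the exponent of $k$ is $0$ and both sides are equal. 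One could also use Jensen's inequality for the convex map $x\mapsto x^{t/r}$ in place of H\"older.

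There is no real obstacle in this argument. The only care needed is bookkeeping in the $\infty$ conventions and choosing the correct conjugate exponent in the H\"older step.
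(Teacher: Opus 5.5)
Your proof is correct and takes the same route as the paper: H\"older's inequality applied to $\sum_j|y_j|^r$ for finite $t$, and the termwise bound $\sum_j|y_j|^r\le k\|\mathbf y\|_\infty^r$ for $t=\infty$. The only difference is the left inequality, which the paper cites as monotonicity of $\ell_p$ norms while you prove it directly by normalizing to $\|\mathbf y\|_r=1$, so your version is slightly more self-contained.
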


\begin{proof}
The first inequality is monotonicity of the $\ell_p$ norms.  For finite $t$,
the second follows by applying H\"older's inequality~\cite{holder1889ueber}
to $\sum_i |\mathbf y_i|^r$.  For $t=\infty$, it follows from
$\sum_i|\mathbf y_i|^r\le k\|\mathbf y\|_\infty^r$.
\end{proof}

\section{Technical Overview}\label{sec:overview}

We prove Theorem~\ref{thm:main-introduction} by a direct reduction from the
NP-complete problem $\ThreeSAT$~\cite{karp1972reducibility}.  Fix a constant
$p>2$ and let $\varphi_{\rm in}$ be the input $3$CNF formula.  Following the
polynomial-moment framework of OpenAI~\cite{openai-ten-advances}, we build a
linear code whose coordinates are grouped into equal-sized blocks called
\emph{evaluation fibers}.  A satisfying assignment produces a sparse
codeword with one nonzero entry in each of its nonempty evaluation fibers.
Our main decoding lemma gives the converse needed for soundness: if the
input formula is unsatisfiable, then every nonzero codeword of small total
support has many nonzero entries in each of its nonempty evaluation fibers.
Construction A and a fiberwise Hadamard map then turn this support dichotomy
into the desired $\ell_p$-norm gap.

\subsection{The Homogeneous Moment Code}

We begin with the code.  Let $\varphi_{\rm in}$ be the input $3$CNF formula and
$s=|\varphi_{\rm in}|$ its bit length.  Delete tautological clauses,
repeated literals within a clause, and unused variables.  Write $\varphi$
for the resulting formula, with variables $x_1,\ldots,x_n$ and clause
occurrences $C_1,\ldots,C_m$.

For a clause \(C\), let \(I_C\subseteq[n]\) be the indices of its variables
and let
\[
  \mathrm{Sat}(C)\subseteq\{0,1\}^{I_C}
\]
be its satisfying local assignments.

Set
\[
  S=100+s+n+m,
  \qquad d=n,
\]
where \(d\) is the degree scale for assignment interpolation.  Choose a fixed
integer \(c\ge4\), and put
\[
  Q=2c+3,
  \qquad \kappa=S^c,
  \qquad T=4S^{2c+1}.
\]
Here \(\kappa\) is the largest evaluation-fiber support that the decoder
tries to reconstruct, while \(T\) is the number of available moments.  The
constant \(c\) controls both quantities and will eventually be chosen large
enough for the desired gap.

Let \(q\) be the first prime in
\[
  S^Q\le q<2S^Q,
\]
and perform all code arithmetic in the prime field \(\mathbb F_q\).  Choose
the \(n\) distinct \emph{anchor points}
\[
  a_i=i-1\in\mathbb F_q\qquad(i\in[n])
\]
and let
\[
  P=\mathbb F_q\setminus\{a_1,\ldots,a_n\}
\]
be the set of \emph{evaluation points}.  The anchors contain the Boolean
assignment, while the code stores evaluations at the points of \(P\).

Let $X$ be an indeterminate.  Given a Boolean assignment
\(\sigma\in\{0,1\}^n\),
interpolation gives a
unique polynomial \(f_\sigma\in\mathbb F_q[X]\) of degree at most \(n-1\)
such that
\[
  f_\sigma(a_i)=\sigma_i\qquad(i\in[n]).
\]
At every non-anchor point $\xi\in P$, the code records the value
$f_\sigma(\xi)$ in one-hot form: the coordinate indexed by
$w=f_\sigma(\xi)$ is $1$, and the other coordinates in that evaluation
fiber are $0$.  For each clause, additional tables record which satisfying
local assignment is selected.  The set of table types is
\[
  \Theta=\{\star\}\cup
  \{(C,\beta):C\text{ is a clause occurrence and }\beta\in \mathrm{Sat}(C)\}.
\]
The type \(\star\) labels the global table.  The pair \((C,\beta)\) labels the subtype
for a satisfying local assignment \(\beta\) to \(C\).

For every triple
\[
  (\tau,\xi,w)\in\Theta\times P\times\mathbb F_q
\]
the code has one coordinate \(x_{\tau,\xi,w}\in\mathbb F_q\).  Fixing
\(\tau\) gives a table, and fixing \((\tau,\xi)\) gives the
\(q\)-coordinate \emph{evaluation fiber}
\[
  (x_{\tau,\xi,w})_{w\in\mathbb F_q}.
\]

There are \(g=|\Theta||P|\) evaluation fibers and \(M=gq\) field
coordinates.  In a codeword coming from a satisfying assignment, exactly
\[
  R=(m+1)|P|
\]
evaluation fibers are nonempty: at every \(\xi\in P\), there is one global
evaluation fiber and one selected subtype evaluation fiber for each clause.
Since a clause has at most eight satisfying local assignments,
$R/g\in[1/8,1]$.  Moreover, $|\Theta|<8S$, $|P|<q<2S^Q$, and, on writing
$D=2Q+1=4c+7$, these bounds give $M<40S^D$.

The code is the solution space of three families of homogeneous linear
constraints.  Each family has a different role in the eventual decoder.

\emph{Clause sums.}
For every clause \(C\), evaluation point \(\xi\), and value \(w\), impose
\[
  x_{\star,\xi,w}
  =\sum_{\beta\in \mathrm{Sat}(C)}x_{(C,\beta),\xi,w}.
\]
For an honest codeword, exactly one summand is active.  For an arbitrary
codeword, the identity says only that the subtype weights add to the global
weight; no one-hot assumption is built into the constraint.

\emph{Ordinary moments.}
For a type \(\tau\) and integer \(j\ge0\), define its ordinary
moment at \(\xi\) by
\[
  \mu_{\tau,j}(\xi)
  =\sum_{w\in\mathbb F_q}x_{\tau,\xi,w}w^j,
\]
with \(w^0=1\), including at \(w=0\).  As \(\xi\) ranges over \(P\), the
code requires these values to be evaluations of one polynomial
\(\mu_{\tau,j}(X)\in\mathbb F_q[X]\) of degree at most \(dj\).  In other
words, it enforces Reed--Solomon membership~\cite{reed-solomon}
\[
  (\mu_{\tau,j}(\xi))_{\xi\in P}\in\RS(P,dj),
\]
where $\RS(P,e)$ is the evaluation code on $P$ of polynomials of degree at
most $e$.  The
parameter inequalities give $dj<|P|$ whenever $j\le T$, so the
representing polynomial is unique.  For the honest codeword this polynomial is
\(f_\sigma(X)^j\).

\emph{Shifted moments.}
These moments tie a subtype to its prescribed Boolean values at the anchors.
Fix a subtype \(\tau=(C,\beta)\), a variable index \(i\in I_C\), and
\(j\ge0\).  Because \(\xi\ne a_i\), the shifted moment
\[
  \eta_{\tau,i,j}(\xi)
  =\sum_{w\in\mathbb F_q}x_{\tau,\xi,w}
    \left(\frac{w-\beta_i}{\xi-a_i}\right)^j
\]
is well defined.  Equivalently, the code imposes the second Reed--Solomon
condition
\[
  (\eta_{\tau,i,j}(\xi))_{\xi\in P}\in\RS(P,(d-1)j).
\]
For the active subtype,
where \(\beta=\sigma|_{I_C}\), the relevant quotient is
\[
  \frac{f_\sigma(X)-\sigma_i}{X-a_i}
\]
and is a polynomial because its numerator vanishes at the anchor \(a_i\).
This explains both the shift and the smaller degree bound.  Every inactive
subtype is zero, so all of its shifted moments are the zero polynomial.
Since $(d-1)j<|P|$ for $j\le T$, each shifted representing polynomial is
also unique.

The moments are imposed for \(0\le j\le T\).  Their common homogeneous
solution space is a linear code
\[
  \mathcal C_\varphi\le\mathbb F_q^M.
\]
The reduction constructs this code by finite-field linear algebra; the
algebraic reconstruction appears only in the soundness proof.

\paragraph{Completeness and Soundness of the Code.}
If $\varphi$ is satisfiable, we populate the tables honestly.  The global
table and one satisfying subtype for each clause store the same one-hot
evaluation at every $\xi\in P$.  The resulting codeword $\mathbf x^\sigma$ has
weight $R$: exactly one coordinate is nonzero in each of its $R$ nonempty
evaluation fibers, and every other coordinate vanishes.

For soundness, set
\[
  \delta=\frac{c-2}{D},
  \qquad S_0=4M^\delta R.
\]
The decoder will prove the following contrapositive statement: if $\varphi$
is unsatisfiable and $0\ne \mathbf u\in\mathcal C_\varphi$ has
$\wt(\mathbf u)\le S_0$, then every nonempty evaluation fiber of
$\mathbf u$ contains at least $T+2$ nonzero coordinates.  Consequently,
after the code is lifted to an
integer lattice, every nonzero lift $\mathbf z$ falls into exactly one of three
cases.  Writing $\overline{\mathbf z}=\mathbf z\bmod q$, either
\begin{enumerate}[label=\textup{(\roman*)}]
  \item $\overline{\mathbf z}=0$ but $\mathbf z\ne0$, so some coordinate of $\mathbf z$ has magnitude at
        least $q$.
  \item $\wt(\overline{\mathbf z})>S_0$, so $\mathbf z$ has large support.
  \item $0<\wt(\overline{\mathbf z})\le S_0$, in which case every nonempty evaluation
        fiber of $\overline{\mathbf z}$ contains at least $T+2$ nonzero coordinates.
\end{enumerate}
These are the only support facts used by the geometric part of the
reduction.

\subsection{Geometric Conversion}
\label{sec:techgeometric}

For the geometric part, we follow the block-Hadamard viewpoint of Hair and
Sahai~\cite{hair-sahai-2511}.

Lift the code by Construction A~\cite{conway-sloane-1999}:
\[
  \Lambda_\varphi
  =\{\mathbf z\in\mathbb Z^M:\mathbf z\bmod q\in\mathcal C_\varphi\}.
\]
The residues of $\Lambda_\varphi$ are exactly the codewords.  Thus a nonzero
lattice vector either has a nonzero codeword as its residue, or has zero
residue and therefore contains a coordinate of magnitude at least $q$.  This
is the purpose of the Construction-A lift.

Let $L$ be the least power of two at least $q$, and let
$\mathbf U\in\{\pm1\}^{q\times L}$ contain $q$ rows of a Sylvester
Hadamard matrix~\cite{sylvester1867thoughts}.  Its rows are orthogonal:
\[
  \mathbf U\mathbf U^{\mathsf T}=L\mathbf I_q.
\]
Apply one copy of $\mathbf U$ to each evaluation fiber:
\[
  \mathbf E=\mathbf I_g\otimes \mathbf U,
  \qquad
  \Lambda'_\varphi=\{\mathbf z\mathbf E:\mathbf z\in\Lambda_\varphi\}.
\]
The orthogonality identity makes this block map injective, so
$\Lambda'_\varphi$ has rank $M$ and ambient dimension $gL<2M$.

The Hadamard map turns Euclidean mass into $\ell_p$ mass in two
complementary ways.  On one evaluation fiber, orthogonality and
Lemma~\ref{lem:norm-comparison} give
\[
  \|\mathbf v\mathbf U\|_p\ge L^{1/p}\|\mathbf v\|_2
  \qquad(\mathbf v\in\mathbb R^q).
\]
Applying this estimate blockwise and comparing the $\ell_p$ and $\ell_2$
norms of the resulting $g$-vector gives the complementary all-block bound
\[
  \|\mathbf z\mathbf E\|_p
  \ge L^{1/p}g^{1/p-1/2}\|\mathbf z\|_2.
\]
The one-block estimate gives a lower bound from a single dense nonempty
evaluation fiber, while the all-block estimate gives a lower bound from
large total support.

The honest integral codeword $\mathbf x^\sigma$ has one nonzero coordinate in each
of its $R$ nonempty evaluation fibers, so every such block maps to a single
$\{\pm1\}$ Hadamard row and
\[
  \|\mathbf x^\sigma\mathbf E\|_p=r_0,
  \qquad r_0:=(RL)^{1/p}.
\]

Now suppose $\varphi$ is unsatisfiable and take any nonzero
$\mathbf z\in\Lambda_\varphi$.  The three cases above give, relative to the honest
radius $r_0$, the respective lower bounds
\[
  \frac q{R^{1/p}},
  \qquad
  2M^{\delta/2}\left(\frac Rg\right)^{1/2-1/p},
  \qquad
  \frac{\sqrt{T+2}}{R^{1/p}}.
\]
Indeed, the zero-residue branch contains a coordinate of magnitude at least
$q$.  In the large-support branch,
$\|\mathbf z\|_2>\sqrt{S_0}=2M^{\delta/2}R^{1/2}$.  In the remaining branch, every
nonempty evaluation fiber has Euclidean norm at least $\sqrt{T+2}$, and one
such evaluation fiber is enough for the one-block estimate.  These bounds apply to
every integer lift, including arbitrary signed coefficient vectors.

The zero-residue branch has ample slack.  Write
$\vartheta=1/2-1/p$.  Using $R/g\in[1/8,1]$, $M<40S^D$,
$q\ge S^Q$, and $T=4S^{Q-2}$, the other two ratios become powers of the
lattice rank.  The branch with a dense evaluation fiber gives the first term
below, and the large-support branch gives the second:
\[
  \chi
  =\min\left\{
    (2c+4)\vartheta-\frac74,
    \frac{2c-5}{4}
  \right\},
  \qquad
  \varepsilon_{\vartheta,c}=\frac{\chi}{D}.
\]
As $c\to\infty$, the two branches converge to $\vartheta/2$ and $1/8$.
Thus every exponent strictly below
\[
  \min\left\{\frac{1/2-1/p}{2},\frac18\right\}
\]
is obtained.  The fixed slack in the two terms defining $\chi$ absorbs the
radius rounding and the constant factors suppressed above; the endpoint
$p=\infty$ is handled directly.

The lattice analysis is therefore finished once the sparse-code statement
in case~(iii) is proved.  We now explain the algebra behind that statement.

\subsection{Decoding a Short Vector}
\label{sec:techdecoding}

Take an arbitrary nonzero codeword \(\mathbf u\in\mathcal C_\varphi\).  In the evaluation fiber
\((\tau,\xi)\), let
\[
  S_\tau(\xi)
  =\{w\in\mathbb F_q:\mathbf u_{\tau,\xi,w}\ne0\}.
\]
A point \(w\in S_\tau(\xi)\) is a node carrying the nonzero field weight
\(\mathbf u_{\tau,\xi,w}\).  These weights are arbitrary and may cancel in every low
moment.  We prove the following positive form of the decoding statement:
if \(\mathbf u\) has small total support and one of its nonempty evaluation
fibers has at most \(T+1\) nodes, then \(\varphi\) is satisfiable.  The proof
has four steps.

\paragraph{Step 1: Reconstruct Each Sparse Table.}
Assume \(\wt(\mathbf u)\le S_0\).  For each type \(\tau\), retain the evaluation
points
\[
  P_\tau=\{\xi\in P:|S_\tau(\xi)|\le\kappa\}.
\]
A discarded point contributes more than \(\kappa\) supported coordinates,
so
\[
  |P\setminus P_\tau|
  \le\frac{S_0}{\kappa}<\frac q{10}.
\]
The two inequalities that matter are
\[
  |P_\tau|>d\bigl(T+2\kappa(\kappa-1)\bigr)
  \qquad\text{and}\qquad
  dT<|P_\tau|.
\]
The first leaves enough evaluation points to promote the pointwise
recurrences to identities over $\mathbb F_q(X)$; the second lets us transfer
the shifted-moment identities in Step~2.  Both follow from the parameter
choices above.  If \(h_\tau\le\kappa\) is the largest retained support size,
the ordinary moment polynomials then have one global representation
\[
  \mu_{\tau,j}(X)
  =\sum_{s=1}^{h_\tau}
     \rho_{\tau,s}\alpha_{\tau,s}^{\,j}.
\]
Here the nodes \(\alpha_{\tau,s}\) are distinct and the weights
\(\rho_{\tau,s}\) are nonzero in a finite extension of
\(\mathbb F_q(X)\).

The reason this reconstruction survives cancellation is visible in the
Hankel determinant.  At a retained evaluation point $\xi$ whose evaluation
fiber has maximum support, the specialization factors as
\[
  \det(\mathbf V_\xi)^2
  \prod_{w\in S_\tau(\xi)}\mathbf u_{\tau,\xi,w}\ne0,
\]
where \(\mathbf V_\xi\) is the Vandermonde matrix on the nodes in that evaluation
fiber.  Thus
the determinant
does not rely on the zeroth moment, or on any other individual moment, being
nonzero.  Factoring the corresponding generic Hankel determinant over the
splitting field gives
\(\det(\mathbf V_\alpha)^2\prod_s\rho_{\tau,s}\ne0\), where
\(\mathbf V_\alpha\) is the Vandermonde matrix on the generic nodes.  Hence every generic weight is
nonzero.  Degree counting then promotes the fiberwise recurrences to one
recurrence over \(\mathbb F_q(X)\), with squarefree support polynomial
\[
  G_\tau(Z)
  =\prod_{s=1}^{h_\tau}(Z-\alpha_{\tau,s}).
\]
Indeed, at a maximum-support specialization the recurrence polynomial has
the distinct nodes of that evaluation fiber as its roots.  Its discriminant
is therefore nonzero there, so the generic support polynomial is squarefree.

\paragraph{Step 2: Recover the Boolean Values at the Anchors.}
Fix a subtype \(\tau=(C,\beta)\), a variable \(i\in I_C\), and put
\(t=X-a_i\).  The ordinary and shifted moment constraints give
\[
  t^j\eta_{\tau,i,j}
  =\sum_{r=0}^j\binom jr(-\beta_i)^{j-r}\mu_{\tau,r}.
\]
Since both sides have degree at most \(dj\le dT<|P_\tau|\), their agreement
on \(P_\tau\) is a polynomial identity.  In the reconstructed representation,
the shifted nodes are
\[
  y_s=\frac{\alpha_{\tau,s}-\beta_i}{t}.
\]
Individual \(y_s\) may have poles at \(t=0\), so we do not specialize the
roots separately.  Instead, consecutive shifted-moment vectors generate an
increasing chain of full-rank \(\mathbb F_q[X]\)-modules.  Each strict
increase lowers a nonnegative determinant degree, and the moment bound
\(T\ge d\kappa^2\) guarantees stabilization.  The shift operator on the
stable module therefore has a matrix over \(\mathbb F_q[X]\).  Its
eigenvalues are the \(y_s\), and hence its characteristic polynomial
\[
  H_i(Y)=\prod_s(Y-y_s)
\]
lies in $\mathbb F_q[X,Y]$.  Clearing the denominator in the definition of
$y_s$ gives
\[
  G_{(C,\beta)}(Z)
  =t^{h_{(C,\beta)}}
    H_i\!\left(\frac{Z-\beta_i}{t}\right).
\]
Reducing this identity modulo $t=X-a_i$ yields
\[
  G_{(C,\beta)}(Z)
  \equiv (Z-\beta_i)^{h_{(C,\beta)}}
  \pmod{X-a_i}.
\]
Thus the whole support polynomial has the prescribed Boolean value at the
anchor.  If two subtypes disagree on a shared variable, their specializations
are powers of the distinct linear factors $Z$ and $Z-1$.  The resultant
therefore has a nonzero specialization, which makes the two support
polynomials coprime over $\mathbb F_q(X)$.

\paragraph{Step 3: Factor the Global Support Polynomial.}
For each type, the Cauchy transform
\[
  \mathcal R_\tau(Z)
  =\sum_{s=1}^{h_\tau}\frac{\rho_{\tau,s}}{Z-\alpha_{\tau,s}}
  =\frac{A_\tau(Z)}{G_\tau(Z)}
\]
is a reduced proper fraction whose first $T+1$ Laurent coefficients agree
with the constrained moments.  To see reducedness, evaluate the numerator at
a root $\alpha_{\tau,s}$ of $G_\tau$:
\[
  A_\tau(\alpha_{\tau,s})
  =\rho_{\tau,s}\prod_{t\ne s}
    (\alpha_{\tau,s}-\alpha_{\tau,t})\ne0.
\]
Thus $A_\tau$ and $G_\tau$ are coprime.
For a clause \(C\), the clause constraints make the first \(T+1\)
coefficients of \(\mathcal R_\star\) equal the corresponding coefficients
of
\[
  \sum_{\beta\in\mathrm{Sat}(C)}\mathcal R_{(C,\beta)}.
\]
The total denominator degree is at most \(9\kappa<T+1\), so this finite
agreement is an exact rational-function identity.  The subtype denominators
are pairwise coprime by Step~2, and reducedness prevents any denominator
factor from cancelling.  Hence
\[
  G_\star
  =\prod_{\beta\in\mathrm{Sat}(C)}G_{(C,\beta)}
  \qquad\text{for every clause }C.
\]

\paragraph{Step 4: Select a Satisfying Assignment.}
We first show that \(G_\star\ne1\).  Otherwise all subtype support
polynomials would also equal \(1\), and every reconstructed moment through
order \(T\) would vanish.  The assumed nonempty evaluation fiber with at most \(T+1\) nodes
would then give a square Vandermonde system with nonzero weights and zero
moment vector, which is impossible.

We may therefore choose an irreducible factor of \(G_\star\).  The
factorization above places it in exactly one subtype for each clause, thereby
selecting a satisfying local assignment.  The anchor congruences from Step~2
force the selected assignments to agree on every shared variable, so they
combine into a global satisfying assignment.  This proves the positive decoding
statement.  Taking its contrapositive shows that, on an unsatisfiable
formula, every nonzero codeword of weight at most \(S_0\) has at least
\(T+2\) nonzero coordinates in each of its nonempty evaluation fibers.

\paragraph{Why the Argument Requires $p>2$.}
At \(p=2\), Hadamard orthogonality preserves Euclidean energy.  The honest
codeword has norm \(\sqrt{RL}\), while the branch with a dense evaluation
fiber gives only
\(\sqrt{L(T+2)}\).  The ratio between these bounds is
\[
  \sqrt{\frac{T+2}{R}}.
\]
In fact, \(T+2<2T=8S^{Q-2}\) and \(R>q\ge S^Q\), so this ratio is less than
\(2\sqrt2/S<1\).  The Euclidean estimate therefore does not separate the
NO-case lower bound from the completeness radius.  For \(p>2\), the Hadamard bounds
gain a positive power governed by \(1/2-1/p\).  This is exactly where the
restriction \(p>2\) enters the reduction.

\section{A Homogeneous Weighted Moment Code}\label{sec:code}

We now give the deterministic direct reduction from the NP-complete problem
$\ThreeSAT$~\cite{karp1972reducibility}.  The polynomial-moment tables and
ordinary and shifted moment constraints below adapt the CVP construction of
OpenAI~\cite{openai-ten-advances}.  The homogeneous weighted formulation is
what allows the soundness argument to handle arbitrary codewords rather than
only the intended one-hot encoding.

Fix a rational norm exponent $p>2$ and write
\[
  \vartheta=\frac12-\frac1p.
\]
\paragraph{Parameters.}
Fix an integer $c\ge4$ and define
\begin{equation}\label{eq:parameter-family}
  \zeta=c-2,
  \qquad Q=2c+3,
  \qquad D=2Q+1=4c+7,
  \qquad
  \chi=\min\left\{
    (2c+4)\vartheta-\frac74,
    \frac{2c-5}{4}
  \right\},
\end{equation}
and
\begin{equation}\label{eq:parameter-exponents}
  \delta=\frac\zeta D,\qquad
  \eps_{\vartheta,c}=\frac\chi D.
\end{equation}
The two entries in the minimum are the margins from the dense
evaluation-fiber case and the large-support case.  Choose \(c\) so that
\(\chi>0\), which is possible for every fixed
rational norm.

\subsection{The Formula and its Table Types}\label{subsec:formula-tables}

Let $\varphi_{\rm in}$ be the input $3$CNF formula and put
$s=|\varphi_{\rm in}|$.  Preprocess it into a formula $\varphi$ by deleting tautological clauses,
repeated occurrences of a literal within a clause, and unused variables. Let
$x_1,\ldots,x_n$ be the variables and $C_1,\ldots,C_m$ the nonempty
clause occurrences.  For a clause $C$, let $I_C\subseteq[n]$ be its set of
at most three variables and
\[
  \mathrm{Sat}(C)\subseteq\{0,1\}^{I_C}
\]
the nonempty set of satisfying local assignments.  Clause occurrences stay
distinct even when their literal lists coincide.

\begin{equation}\label{eq:base-parameters}
  S=100+s+n+m,\qquad
  d=n,\qquad
  \kappa=S^c,\qquad
  T=4S^{2c+1}.
\end{equation}
Because a nontrivial preprocessed formula has $n,m\ge1$, its base
parameter satisfies $S\ge102$.
Choose the first prime in
\begin{equation}\label{eq:q-range}
  S^Q\le q<2S^Q
\end{equation}
and work over $\F_q$.  Bertrand's postulate~\cite{erdos1932beweis}
guarantees existence, and a
deterministic primality test makes the choice effective.  Use the distinct
anchors $a_i=i-1\in \F_q$ and the evaluation set
\[
  P=\F_q\setminus\{a_1,\ldots,a_n\}.
\]
They are distinct because $n<S<q$, so the residues $0,\ldots,n-1$ do not
wrap around in $\F_q$.
The table types are
\begin{equation}\label{eq:types}
  \Theta=\{\star\}\cup
  \{(C_r,\beta):1\le r\le m,\ \beta\in \mathrm{Sat}(C_r)\}.
\end{equation}
The distinguished type $\star$ labels the global table shared by all
clauses.  A subtype $(C,\beta)$ labels the table for one satisfying local
assignment $\beta$ to one clause occurrence $C$.

For each $\tau\in\Theta$, $\xi\in P$, and $w\in \F_q$, introduce a field
coordinate $x_{\tau,\xi,w}$.  For fixed $\tau$, its \emph{table} is the
array
\[
  (x_{\tau,\xi,w})_{\xi\in P,\,w\in\F_q}.
\]
For fixed $(\tau,\xi)$, the $q$-coordinate block
\[
  (x_{\tau,\xi,w})_{w\in\F_q}
\]
is the \emph{evaluation fiber} labelled by $(\tau,\xi)$.  Thus the pair is
an evaluation-fiber label, whereas the triple $(\tau,\xi,w)$ identifies one
coordinate inside that evaluation fiber.  There are
\begin{equation}\label{eq:M-g-R}
  g=|\Theta||P|\quad\text{evaluation fibers},\qquad
  M=gq\quad\text{coordinates},\qquad R=(m+1)|P|.
\end{equation}
Since $1\le|\mathrm{Sat}(C)|\le8$,
\begin{equation}\label{eq:R-over-g}
  \frac18\le\frac Rg\le1,\qquad
  \frac{S^{2Q}}2<M<40S^D.
\end{equation}
Indeed, $m+1\le|\Theta|\le8(m+1)$ gives the first pair of bounds.  Also
$|P|=q-n>q/2$, so $M>q^2/2\ge S^{2Q}/2$.  In the other direction,
$|\Theta|<8S$, $|P|<q$, and $q<2S^Q$ give
$M<32S^{2Q+1}<40S^D$.
The quantity $R$ is the number of nonempty evaluation fibers in an honest
witness.

\subsection{Reed--Solomon Notation and the Linear Constraints}
\label{subsec:code-constraints}

For a finite set \(A\subseteq\F_q\) and an integer \(e\ge0\), define
\[
  \RS(A,e)
  =\{(f(\xi))_{\xi\in A}:
      f\in\F_q[X]\text{ and }\deg f\le e\}.
\]
This is the Reed--Solomon evaluation space of degree at most \(e\)
\cite{reed-solomon}.  If \(e<|A|\), the representing polynomial is unique,
because a nonzero polynomial of degree at most \(e\) cannot have more than
\(e\) roots.

First impose, for every clause $C$, $\xi\in P$, and $w\in \F_q$,
\begin{equation}\label{eq:clause-decomposition}
  x_{\star,\xi,w}
  =\sum_{\beta\in \mathrm{Sat}(C)}x_{(C,\beta),\xi,w}.
\end{equation}
For each type and $0\le j\le T$, define the ordinary moment
\begin{equation}\label{eq:ordinary-moment}
  \mu_{\tau,j}(\xi)=\sum_{w\in \F_q}x_{\tau,\xi,w}w^j,
\end{equation}
where $w^0=1$ even at $w=0$, and require
\begin{equation}\label{eq:ordinary-rs}
  (\mu_{\tau,j}(\xi))_{\xi\in P}\in\RS(P,dj).
\end{equation}
For a subtype $\tau=(C,\beta)$, a variable $i\in I_C$, and $0\le j\le T$,
define
\begin{equation}\label{eq:shifted-moment}
  \eta_{\tau,i,j}(\xi)
  =\sum_{w\in \F_q}x_{\tau,\xi,w}
     \left(\frac{w-\beta_i}{\xi-a_i}\right)^j
\end{equation}
and require
\begin{equation}\label{eq:shifted-rs}
  (\eta_{\tau,i,j}(\xi))_{\xi\in P}
  \in\RS(P,(d-1)j).
\end{equation}
The denominator is nonzero because anchors were removed from $P$.  Again,
the zeroth power is defined as $1$.  Let
\[
  \cC_\varphi\le \F_q^M
\]
be the common solution space of
\eqref{eq:clause-decomposition}, \eqref{eq:ordinary-rs}, and
\eqref{eq:shifted-rs}.  These are homogeneous linear conditions.  Membership
in a Reed--Solomon evaluation space can be imposed using a parity-check
matrix, so $\cC_\varphi$ is explicitly constructible.  Concretely, the
evaluation map sends the \(e+1\) coefficients of a degree-\(e\) polynomial
to its values on \(P\).  A parity-check matrix is a basis for the left
nullspace of that evaluation map.  A vector lies in the evaluation space
exactly when all of those linear checks vanish.

Clause decomposition relates the table types, ordinary moments link the
evaluation fibers of one type, and shifted moments encode the prescribed
Boolean value at each anchor.  Homogeneity permits arbitrary field weights,
including evaluation fibers whose zeroth moment vanishes.

The parameter family gives
\begin{equation}\label{eq:moment-budget}
  dT<|P|,\qquad T\ge d\kappa^2,\qquad T>9\kappa.
\end{equation}
For these inequalities, use $d=n\le S$, $S\ge102$, and
$q\ge S^{2c+3}$.  In particular,
$dT\le4S^{2c+2}\le4q/S<q-n=|P|$, while
$d\kappa^2\le S^{2c+1}<T$ and $9\kappa<T$.
Thus every representing moment polynomial is unique.  Henceforth
\(\mu_{\tau,j}(X)\) and
\(\eta_{\tau,i,j}(X)\) denote the unique representing polynomials whose
evaluations on \(P\) are the moment vectors defined above.  We retain
\(\xi\) in the argument only when referring to a value at one evaluation
point.

\subsection{The Honest Codeword}\label{subsec:honest-codeword}

Suppose $\sigma\in\{0,1\}^n$ satisfies $\varphi$.  Let
$f_\sigma\in \F_q[X]$ be the polynomial of degree at most $n-1$ satisfying
\begin{equation}\label{eq:assignment-interpolation}
  f_\sigma(a_i)=\sigma_i\qquad(1\le i\le n).
\end{equation}
For every $\xi\in P$, place a $1$ at $w=f_\sigma(\xi)$ in the global
evaluation fiber.  For each clause $C$, place a $1$ at the same $w$ in the unique
subtype $(C,\sigma|_{I_C})$, and put zero everywhere else.  Denote the
result by $\mathbf x^\sigma$.

\begin{lemma}[Code completeness]\label{lem:code-completeness}
If $\sigma$ satisfies $\varphi$, then
\[
  \mathbf x^\sigma\in\cC_\varphi,\qquad \wt(\mathbf x^\sigma)=R.
\]
Exactly $R$ evaluation fibers contain one nonzero coordinate.
\end{lemma}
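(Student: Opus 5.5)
The plan is to check the three families of constraints directly on $\mathbf x^\sigma$ and then count its support. Throughout, fix $\xi\in P$ and write $w_\xi=f_\sigma(\xi)$.

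\emph{Clause sums.} For every clause $C$, the global fiber $(\star,\xi)$ has a single $1$ at $w_\xi$. Among the subtypes $(C,\beta)$, exactly one has $\beta=\sigma|_{I_C}$, and this $\beta$ lies in $\mathrm{Sat}(C)$ because $\sigma$ satisfies $C$. That subtype also has a single $1$ at $w_\xi$, and every other subtype of $C$ is zero. Hence both sides of \eqref{eq:clause-decomposition} equal $[w=w_\xi]$ for every $w$.

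\emph{Ordinary moments.} An active fiber (global, or the selected subtype) gives $\mu_{\tau,j}(\xi)=w_\xi^j=f_\sigma(\xi)^j$; this also holds for $j=0$, where the convention $w^0=1$ gives $1=f_\sigma(\xi)^0$. Since $\deg f_\sigma\le n-1\le d$, the polynomial $f_\sigma^j$ has degree at most $dj$, so \eqref{eq:ordinary-rs} holds. An inactive subtype has all moments $0$, which lies in every $\RS(P,e)$.

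\emph{Shifted moments.} For the selected subtype $\tau=(C,\sigma|_{I_C})$ and $i\in I_C$, we have $\beta_i=\sigma_i=f_\sigma(a_i)$. Therefore $g_i(X)=(f_\sigma(X)-\sigma_i)/(X-a_i)$ is a polynomial of degree at most $n-2\le d-1$; when $f_\sigma$ is constant, $g_i=0$. Then
\[
\eta_{\tau,i,j}(\xi)=\bigl((w_\xi-\beta_i)/(\xi-a_i)\bigr)^j=g_i(\xi)^j ,
\]
and $g_i^j$ has degree at most $(d-1)j$, so \eqref{eq:shifted-rs} holds; for $j=0$ both sides equal $1$. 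Inactive subtypes again give zero vectors. This is the only step needing care: the point is exactly the vanishing of the numerator at the anchor, together with the edge cases $j=0$ and $\deg f_\sigma\le 0$.

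\emph{Weight count.} For each $\xi\in P$, the nonzero fibers are the global fiber and one subtype fiber per clause occurrence. These are $m+1$ fibers, each containing exactly one nonzero entry, and all other fibers are zero. Hence exactly $(m+1)|P|=R$ evaluation fibers contain one nonzero coordinate, and $\wt(\mathbf x^\sigma)=R$.
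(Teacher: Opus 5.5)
Your proof is correct and follows the paper's argument: the clause sums hold through the unique active subtype $(C,\sigma|_{I_C})$, the ordinary and shifted moment polynomials are $f_\sigma^j$ and $\bigl((f_\sigma-\sigma_i)/(X-a_i)\bigr)^j$, and the weight count is one nonzero entry in each of the $(m+1)|P|$ active fibers. Your explicit handling of the $j=0$ convention and of a constant $f_\sigma$ (which covers the paper's $n=1$ case) is a minor refinement, not a different route.
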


\begin{proof}
Equation~\eqref{eq:clause-decomposition} holds because for each clause
exactly one satisfying subtype copies the global entry.  The ordinary
moments are $f_\sigma(X)^j$, of degree at most $(n-1)j\le dj$.
For an active subtype and $i\in I_C$,
\[
  \frac{f_\sigma(X)-\sigma_i}{X-a_i}
\]
is a polynomial of degree at most $n-2$ when $n\ge2$, and is zero when
$n=1$.  Its $j$th power proves \eqref{eq:shifted-rs}.  At each point of
$P$, there is one nonempty global evaluation fiber and one nonempty subtype
evaluation fiber per clause, giving $(m+1)|P|=R$ nonzero coordinates.
\end{proof}
\section{Weighted Polynomial-Moment Reconstruction}
\label{sec:weighted-reconstruction}

Every nonzero coordinate may carry an arbitrary field value.  Thus a fiber
is a finite set of \emph{nodes}, each carrying a nonzero field-valued
\emph{weight}.  Sufficiently many low-degree moment polynomials determine
one global collection of algebraic nodes and weights as proof witnesses.
The result is a weighted version of the classical Prony reconstruction
principle~\cite{prony-1795}, implemented through Hankel determinants.

Throughout this section, \(q\) is the prime fixed in
Section~\ref{sec:code}.  For an evaluation set
\(P'\subseteq\F_q\), a weighted fiber at \(\xi\in P'\) consists of a
support set \(S(\xi)\subseteq\F_q\) and weights.  Here
$\F_q^\times=\F_q\setminus\{0\}$ is the multiplicative group of nonzero
field elements, and the weights satisfy
\[
 b_{\xi,w}\in \F_q^\times \qquad (w\in S(\xi)).
\]
Its $j$th moment is
\[
 m_j(\xi)=\sum_{w\in S(\xi)}b_{\xi,w}w^j,
\]
where $w^0=1$, including when $w=0$.  Notice that $m_0(\xi)$ is the sum of
the weights, not the cardinality of $S(\xi)$.

The reconstructed nodes may lie in a finite algebraic extension of
\(\F_q(X)\).  The descended support polynomial and its moment identities are
the objects used by the decoder.

We use the discriminant to detect whether a polynomial has repeated roots.
A \emph{splitting field} of a polynomial is a larger field
$\Omega$ containing its coefficient field in which the polynomial factors
completely into linear factors.  Every nonconstant polynomial has a splitting
field.  The conclusions are independent of the chosen splitting field.

For $h\ge1$, let
\[
 F(Y)=Y^h+c_{h-1}Y^{h-1}+\cdots+c_0
\]
be monic, and list its roots $\lambda_1,\ldots,\lambda_h$, with multiplicity, in a
splitting field.  Its \emph{discriminant} is
\[
 \operatorname{Disc}(F)
   :=\prod_{1\le r<s\le h}(\lambda_s-\lambda_r)^2.
\]
This product is unchanged when the roots are permuted.  The fundamental
theorem of symmetric polynomials and Vieta's formulas
\cite{cox-little-oshea-2015} therefore express it
as a polynomial with integer coefficients in
$c_0,\ldots,c_{h-1}$.  Thus the same coefficient formula remains valid in
every characteristic.  The relevant consequences are:
\begin{enumerate}
\item $\operatorname{Disc}(F)\ne0$ exactly when the displayed roots are
pairwise distinct.
\item substituting a field value into the coefficients of $F$ also
substitutes that value into $\operatorname{Disc}(F)$, whenever all
coefficients are defined there.
\end{enumerate}
For example, the discriminant of $Y^2+c_1Y+c_0$ is $c_1^2-4c_0$.  Thus the
discriminant is simply a coefficient-level certificate that no two factors
in a splitting field have collided.

\begin{lemma}[Weighted polynomial-moment reconstruction]
\label{lem:weighted-reconstruction}
Let $d,\kappa\ge1$ and $T\ge0$ be integers, and let $P'\subseteq \F_q$.  At
every $\xi\in P'$, let $S(\xi)\subseteq \F_q$ have size at
most $\kappa$, and attach a nonzero weight $b_{\xi,w}\in \F_q^\times$ to each
$w\in S(\xi)$.  Suppose that, for $0\le j\le T$, polynomials
$\mu_j\in \F_q[X]$ satisfy
\[
 \deg\mu_j\le dj,
 \qquad
 \mu_j(\xi)=\sum_{w\in S(\xi)}b_{\xi,w}w^j.
 \tag{R1}
\]
Assume
\[
 T\ge2\kappa-1,
 \qquad
 |P'|>d\bigl(T+2\kappa(\kappa-1)\bigr).
 \tag{R2}
\]
These inequalities imply that $P'$ is nonempty, so the following maximum is
defined.  Put
\[
 h=\max_{\xi\in P'}|S(\xi)|.
\]
Then there is a monic polynomial $G\in\F_q(X)[Y]$ of degree $h$ with the
following property.  In a splitting field $\Omega$ of $G$,
\[
 G(Y)=\prod_{s=1}^h(Y-\alpha_s)
\]
for pairwise distinct nodes $\alpha_1,\ldots,\alpha_h\in\Omega$, and there
are nonzero weights $\rho_1,\ldots,\rho_h\in\Omega$ such that
\[
 \mu_j(X)=\sum_{s=1}^h\rho_s\alpha_s^j
 \qquad (0\le j\le T).
 \tag{R3}
\]
When $h=0$, take $G=1$ and $\Omega=\F_q(X)$.  The conclusion means that every
$\mu_j$ is zero.
\end{lemma}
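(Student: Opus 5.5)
The plan is to run Prony's method generically over $K=\F_q(X)$, with Hankel determinants as the main tool. For $k\ge0$, let $H_k(X)$ be the $k\times k$ Hankel matrix with entries $\mu_{a+b}(X)$ for $0\le a,b\le k-1$, and let $H_k(\xi)$ be its specialization, built from the pointwise moments. Since $T\ge 2\kappa-1$, every $H_k$ with $k\le\kappa+1$ that we use involves only moments of index at most $2k-2\le T$. The exception is $k=\kappa+1$, which would need index $2\kappa$. Because of this, I will only use $H_h$, together with the recurrence written with index shifts $j\le T$.

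\textbf{Pointwise structure.} At each $\xi$ we have the factorization
\[
H_k(\xi)=V_\xi^{\mathsf T}\operatorname{diag}(b_{\xi,w})V_\xi ,
\]
where $V_\xi$ is the $|S(\xi)|\times k$ Vandermonde matrix. So $\det H_h(\xi)=\det(V_\xi)^2\prod_w b_{\xi,w}\ne0$ at any $\xi$ with $|S(\xi)|=h$. Moreover $H_h(\xi)$ has rank $|S(\xi)|\le h$ everywhere. In addition, every window of $h+1$ consecutive moments satisfies the recurrence
\[
\sum_{r=0}^h e_r(\xi)\,m_{j+r}(\xi)=0,\qquad j+h\le T,
\]
where $\prod_{w\in S(\xi)}(Y-w)\cdot Y^{h-|S(\xi)|}=\sum_r e_r(\xi)Y^r$ is a monic degree-$h$ polynomial annihilating the fiber.

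\textbf{Step 1: the generic minor is nonzero.} $\det H_h(X)$ is a polynomial of degree at most $d\sum_{a}(2a)=d\,h(h-1)$, and it is nonzero at a maximal-support $\xi$. Hence $H_h$ is invertible over $K$. Define $e=(e_0,\dots,e_{h-1})\in K^h$ as the solution of
\[
H_h e=-(\mu_h,\dots,\mu_{2h-1})^{\mathsf T},
\]
and set $G(Y)=Y^h+\sum_{r<h}e_rY^r$. By Cramer's rule, $e_r=N_r/\det H_h$ with $\deg N_r\le d(h^2+h)$, roughly.

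\textbf{Step 2: the recurrence holds generically.} For each $j\le T-h$, consider the polynomial
\[
\Phi_j=\det H_h\cdot\Bigl(\mu_{j+h}+\sum_r e_r\mu_{j+r}\Bigr)=\det H_h\,\mu_{j+h}+\sum_r N_r\mu_{j+r}.
\]
Its degree is at most $d(T+2\kappa(\kappa-1))$ after bookkeeping; this is where (R2) is designed to fit. I claim $\Phi_j(\xi)=0$ for every $\xi\in P'$.

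If $\det H_h(\xi)\ne0$, the specialized $e(\xi)$ is the unique solution, and it coincides with the pointwise annihilator coefficients, because the pointwise ones also solve the system. So $\Phi_j(\xi)=0$.

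If $\det H_h(\xi)=0$, use the adjugate form of $\Phi_j$ instead. The vector of $N_r(\xi)$ together with $\det H_h(\xi)$ is a linear combination of minors of the $h\times(h+1)$ matrix of moment windows. The row vector $(m_{j+r}(\xi))_r$ lies in the row space spanned by the rank-$\le|S(\xi)|$ Hankel structure. Therefore $\Phi_j(\xi)$ is an $(h+1)\times(h+1)$ determinant of moment rows of the form $\sum_w b_{\xi,w}w^{\cdot}$, which has rank at most $|S(\xi)|\le h$ and hence vanishes.

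\textbf{This uniform vanishing is the main obstacle.} I would write $\Phi_j$ explicitly as the $(h+1)\times(h+1)$ determinant whose rows are the windows $a=0,\dots,h-1$ plus the window starting at $j$. Every such matrix factors as $W^{\mathsf T}\operatorname{diag}(b)V$ with inner dimension $|S(\xi)|\le h$, so its determinant vanishes pointwise for all $\xi$ without any case split. The degree count then follows from the row degrees. Since $|P'|$ exceeds the degree, $\Phi_j\equiv0$. Hence the $\mu_j$ satisfy the order-$h$ recurrence with characteristic polynomial $G$ for all $j\le T$.

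\textbf{Step 3: roots and weights.} Specialize $G$ at a maximal-support $\xi$ where $\det H_h(\xi)\ne0$. Then all $N_r$ are defined, and $G$ specializes to $\prod_{w\in S(\xi)}(Y-w)$, which has distinct roots. So $\operatorname{Disc}(G)$ has a nonzero specialization, by property 2 of the discriminant, and $G$ is squarefree with distinct roots $\alpha_s\in\Omega$.

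Now solve the Vandermonde system $\sum_s\rho_s\alpha_s^j=\mu_j$ for $j<h$. The recurrence then propagates (R3) to all $j\le T$.

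For nonzeroness of the weights, factor
\[
H_h=V_\alpha^{\mathsf T}\operatorname{diag}(\rho)V_\alpha ,
\]
which gives $\det H_h=\det(V_\alpha)^2\prod_s\rho_s\ne0$, so every $\rho_s\ne0$.

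\textbf{The case $h=0$.} Every fiber is empty, so each $\mu_j$ vanishes on $P'$. Since $|P'|>dT\ge\deg\mu_j$, each $\mu_j\equiv0$.
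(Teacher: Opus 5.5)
Your argument is correct. Most of it follows the paper: the Hankel determinant $\Delta_h=\det H_h$, Cramer's rule for the coefficients of $G$, a cleared residual killed by root counting, a discriminant specialization at a maximum-support fiber, Vandermonde weights, and $\Delta_h=\det(\mathbf V_\alpha)^2\prod_s\rho_s$. The key step, however, is genuinely different.

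\textbf{Where the routes differ.} The paper proves $\mathcal E_r(\xi)=0$ only at maximum-support points, where $G_\xi=\prod_s(Y-w_s)$. It must therefore count those points, using the fact that every other point of $P'$ is a root of $\Delta_h$; this is why (R2) pays $\kappa(\kappa-1)$ twice. You instead identify your residual $\Phi_j$ (the paper's $\mathcal E_r$) with an $(h+1)\times(h+1)$ determinant whose rows are the windows starting at $0,\ldots,h-1$ and at $j$. The identification is cofactor expansion along the bordering row, matched with the Cramer numerators. At any $\xi\in P'$, the specialization of this determinant factors through inner dimension $|S(\xi)|\le h$, so it vanishes on all of $P'$ with no case split.

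\textbf{What each route buys.} The bordered determinant has degree at most $d(h^2+j)\le d(T+h(h-1))$. Your route therefore needs only $|P'|>d(T+h(h-1))$. This is weaker than (R2), and also weaker than the sharpened condition $|P'|>d(T+2h(h-1))$ in the paper's remark. A maximum-support point is then used only to see $\Delta_h\ne0$ and $\operatorname{Disc}(G)\ne0$, and such a point exists by the definition of $h$. In exchange, the paper's route avoids the cofactor identification and reuses the single specialization $G_\xi=\prod_s(Y-w_s)$ that it needs for the discriminant anyway.

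\textbf{A bookkeeping caveat.} Do not use your Cramer estimate $\deg N_r\le d(h^2+h)$ directly. Combined with $\deg\mu_{j+r}\le d(j+r)$, it only gives $\deg\Phi_j\le d(T+h^2+h-1)$. That exceeds what (R2) guarantees when $h=\kappa\le2$. Use the sharp bound $\deg N_r\le d(h^2-r)$, or the row-and-column degree count of the bordered determinant, as you indicate at the end of Step~2.
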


We call the nonzero elements $\rho_s\in\Omega$ in \textup{(R3)} the
\emph{generic weights} because they live over the function field $\F_q(X)$
(inside a splitting field of $G$) and represent all retained fibers at once.
The adjective does not mean that they are random or chosen from a generic
set: the reduction is entirely deterministic, and these weights occur only
as proof witnesses.

\begin{proof}
We separate the empty case because no Hankel matrix is then needed.  If
$h=0$, every fiber is empty, so every $\mu_j$ vanishes on all of $P'$.  Its
degree is at most $dj\le dT<|P'|$, where the strict inequality follows
already from \textup{(R2)}, so $\mu_j=0$.  Take $G=1$,
$\Omega=\F_q(X)$, and the empty sets of nodes and weights.

Assume henceforth that $h\ge1$.  Form the $h\times h$ Hankel determinant
\[
 \Delta_h(X)=\det\bigl(\mu_{i+j}(X)\bigr)_{0\le i,j<h}.
 \tag{R4}
\]
The largest moment index here is $2h-2\le2\kappa-2\le T$.  In a determinant
term indexed by a permutation $\pi$ of $\{0,\ldots,h-1\}$, the total degree
is at most
\[
 d\sum_{i=0}^{h-1}(i+\pi(i))=dh(h-1),
\]
and therefore
\[
 \deg\Delta_h\le dh(h-1).
 \tag{R5}
\]

Fix a point $\xi$ with $|S(\xi)|=h$, list its nodes as
$w_1,\ldots,w_h$, and put
\[
 (\mathbf V_\xi)_{i,s}=w_s^i,
 \qquad
 \mathbf W_\xi=\operatorname{diag}(b_{\xi,w_1},\ldots,b_{\xi,w_h}).
\]
The matrix \(\mathbf V_\xi\) is a Vandermonde matrix: its \(s\)th column lists the
powers \(1,w_s,\ldots,w_s^{h-1}\).  Its determinant is
\(\prod_{r<s}(w_s-w_r)\), which is nonzero because the nodes are distinct.
The specialized Hankel matrix is
\[
 \bigl(\mu_{i+j}(\xi)\bigr)_{i,j<h}
 =\mathbf V_\xi\mathbf W_\xi\mathbf V_\xi^{\mathsf T}.
\]
The Vandermonde matrix $\mathbf V_\xi$ is invertible because its nodes are distinct,
and $\mathbf W_\xi$ is invertible because all weights are nonzero.  Hence
\[
 \Delta_h(\xi)=\det(\mathbf V_\xi)^2
   \prod_{s=1}^hb_{\xi,w_s}\ne0.
 \tag{R6}
\]
At a point with fewer than $h$ nodes, the same Hankel matrix factors through
an $h\times |S(\xi)|$ matrix and has rank below $h$.  Thus
\(\Delta_h(\xi)=0\)
there.  In particular, $\Delta_h$ is a nonzero polynomial, and the number of
maximum-support points is at least
\[
 |P'|-dh(h-1)
 >d\bigl(T+2\kappa(\kappa-1)-h(h-1)\bigr)
 \ge d\bigl(T+h(h-1)\bigr).
 \tag{R7}
\]
The last inequality uses
$h(h-1)\le\kappa(\kappa-1)$.  This is the exact point where the factor
$2\kappa(\kappa-1)$ in \textup{(R2)} pays twice for the possible zeros of
the Hankel determinant: once to locate maximum-support fibers and once to
prove the recurrence globally.

We next construct the polynomial whose roots are the global nodes.  Over
the function field $\F_q(X)$, solve
\[
 \sum_{l=0}^{h-1}\mu_{i+l}\gamma_l=-\mu_{i+h}
 \qquad (0\le i<h).
 \tag{R8}
\]
All moments used are available because $2h-1\le2\kappa-1\le T$, and the
coefficient determinant is the nonzero element $\Delta_h$.  Cramer's rule
gives
\[
 \gamma_l=\frac{n_l}{\Delta_h},
 \qquad
 n_l\in \F_q[X],
 \qquad
 \deg n_l\le d(h^2-l).
 \tag{R9}
\]
For completeness, the last degree is obtained by replacing column $l$ of
the Hankel matrix by $(\mu_{i+h})_i$: the sum of the row indices, the
remaining column indices, and the replacement shift is
\[
 \sum_i i+\sum_{k\ne l}k+h=h^2-l.
\]
Define
\[
 G(Y)=Y^h+\sum_{l=0}^{h-1}\gamma_lY^l\in \F_q(X)[Y].
 \tag{R10}
\]

At a maximum-support point $\xi$, the inequality
\(\Delta_h(\xi)\ne0\) means that every rational coefficient in
\textup{(R8)} is defined after substituting \(X=\xi\).  Write $G_\xi$ for
the polynomial obtained by making this substitution in the coefficients of
$G$.  The specialized system is
\[
 \mathbf V_\xi\mathbf W_\xi
 \bigl(G_\xi(w_1),\ldots,G_\xi(w_h)\bigr)^{\mathsf T}=0.
\]
Both matrices are invertible, so $G_\xi(w_s)=0$ for every $s$.  As both sides
below are monic of degree $h$,
\[
 G_\xi(Y)=\prod_{s=1}^h(Y-w_s).
 \tag{R11}
\]
This specialized polynomial has distinct roots, so its discriminant is
nonzero.  The discriminant is a polynomial expression in the coefficients,
and therefore commutes with this coefficientwise substitution.  Hence the
discriminant of \(G\) is a nonzero element of \(\F_q(X)\).
Choose a splitting field $\Omega$ of $G$, and list all $h$ roots there,
with multiplicity, as $\alpha_1,\ldots,\alpha_h$.  By the defining
root-difference formula for the discriminant,
\[
 0\ne\operatorname{Disc}(G)
   =\prod_{1\le r<s\le h}(\alpha_s-\alpha_r)^2.
\]
Every factor in this product is therefore nonzero.  Thus the roots are
pairwise distinct, and
\[
 G(Y)=\prod_{s=1}^h(Y-\alpha_s)
\]
is a product of $h$ distinct linear factors in $\Omega[Y]$.

It remains to show that the moment recurrence holds through the entire
available range.  For $0\le r\le T-h$, define the polynomial
\[
 \mathcal E_r=\Delta_h\mu_{r+h}+\sum_{l=0}^{h-1}n_l\mu_{r+l}\in \F_q[X].
 \tag{R12}
\]
At every maximum-support point, equation
\textup{(R11)} implies $\mathcal E_r(\xi)=0$.  Moreover,
\[
 \deg \mathcal E_r\le d(r+h^2)
 \le d(T-h+h^2)
 =d\bigl(T+h(h-1)\bigr).
 \tag{R13}
\]
The number of its known zeros is strictly larger than this exact bound by
\textup{(R7)}.  Hence $\mathcal E_r=0$, and division by
$\Delta_h$ gives
\[
 \mu_{r+h}+\sum_{l=0}^{h-1}\gamma_l\mu_{r+l}=0
 \qquad (0\le r\le T-h).
 \tag{R14}
\]

Use the pairwise distinct roots
$\alpha_1,\ldots,\alpha_h\in\Omega$ fixed above.  Their Vandermonde matrix
\[
 \mathbf V_\alpha=(\alpha_s^j)_{0\le j<h,\,1\le s\le h}
\]
is invertible.  Define $\rho_1,\ldots,\rho_h\in\Omega$ uniquely by
\[
 \mathbf V_\alpha(\rho_1,\ldots,\rho_h)^{\mathsf T}
 = (\mu_0,\ldots,\mu_{h-1})^{\mathsf T}.
 \tag{R15}
\]
The sequence $\sum_s\rho_s\alpha_s^j$ has the same first $h$ values as
$\mu_j$ and obeys the recurrence induced by $G$.  Indeed,
\(G(\alpha_s)=0\), multiplied by \(\alpha_s^r\), gives
\[
  \alpha_s^{r+h}
  +\sum_{l=0}^{h-1}\gamma_l\alpha_s^{r+l}=0.
\]
After multiplying by \(\rho_s\) and summing over \(s\), this is precisely
the recurrence in \textup{(R14)}.  Equation
\textup{(R14)} therefore proves \textup{(R3)} by induction through
index $T$.

Finally, use these identities through index $2h-2$ to factor the global
Hankel matrix:
\[
 \bigl(\mu_{i+j}\bigr)_{i,j<h}
 =\mathbf V_\alpha\operatorname{diag}(\rho_1,\ldots,\rho_h)\mathbf V_\alpha^{\mathsf T}.
\]
Consequently,
\[
 \Delta_h=\det(\mathbf V_\alpha)^2\prod_{s=1}^h\rho_s.
 \tag{R16}
\]
The left side and the Vandermonde determinant are nonzero, so every
$\rho_s$ is nonzero.
\end{proof}

The reconstructed roots enter through one rational function whose denominator
is their support polynomial.  The following construction records its
reducedness.

\begin{corollary}[Reduced Cauchy fraction]
\label{cor:reduced-cauchy-fraction}
Under the conclusion of Lemma~\ref{lem:weighted-reconstruction}, put
\[
  G(Z)=\sum_{k=0}^{h}g_kZ^k
      =\prod_{s=1}^{h}(Z-\alpha_s),
  \qquad g_h=1.
\]
If $h\ge1$, define
\begin{equation}\label{eq:cauchy-numerator-root-form}
  A(Z)=\sum_{s=1}^{h}\rho_s\frac{G(Z)}{Z-\alpha_s}.
\end{equation}
Dividing by \(G\) shows the reason for the name:
\[
  \frac{A(Z)}{G(Z)}
  =\sum_{s=1}^h\frac{\rho_s}{Z-\alpha_s}.
\]
It is the ordinary generating function of the weighted moments, written as
a sum of simple Cauchy kernels.
Then $A\in \F_q(X)[Z]$, $\deg_ZA<h$, and $\gcd(A,G)=1$.  Moreover, if
\begin{equation}\label{eq:generated-moments}
  \widehat\mu_j:=\sum_{s=1}^{h}\rho_s\alpha_s^j
  \qquad(j\ge0),
\end{equation}
then $\widehat\mu_j=\mu_j$ for $0\le j\le T$ and
\begin{equation}\label{eq:cauchy-laurent-expansion}
  \frac{A(Z)}{G(Z)}
  =\sum_{j\ge0}\widehat\mu_jZ^{-j-1}
  \qquad\text{in }\F_q(X)((Z^{-1})).
\end{equation}
The notation \(\F_q(X)((Z^{-1}))\) means formal Laurent series in negative
powers of \(Z\).
For $h=0$, the same notation means
\[
  (A,G)=(0,1),
  \qquad \widehat\mu_j=0\quad(j\ge0).
\]
\end{corollary}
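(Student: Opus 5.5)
We will verify the four claims of Corollary~\ref{cor:reduced-cauchy-fraction} in turn; the case $h=0$ is immediate, since Lemma~\ref{lem:weighted-reconstruction} gives $\mu_j=0$ for $0\le j\le T$, and $(A,G)=(0,1)$ with $\widehat\mu_j=0$ satisfies everything (the expansion reads $0=0$). So assume $h\ge1$.

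\emph{Membership and degree.} Each $G(Z)/(Z-\alpha_s)=\prod_{t\ne s}(Z-\alpha_t)$ is a polynomial in $\Omega[Z]$ of degree $h-1$, so $\deg_ZA\le h-1<h$. To descend to $\F_q(X)[Z]$, we will avoid Galois arguments and instead use the expansion identity below. Writing $G(Z)/(Z-\alpha_s)=\sum_{k}\bigl(\sum_{i>k}g_i\alpha_s^{\,i-k-1}\bigr)Z^{k}$ (synthetic division), the coefficient of $Z^k$ in $A$ is
\[
  \sum_{i=k+1}^{h}g_i\,\widehat\mu_{i-k-1},
\]
and since $i-k-1\le h-1\le T$ (because $T\ge2\kappa-1\ge 2h-1$), each $\widehat\mu_{i-k-1}=\mu_{i-k-1}\in\F_q[X]$ by (R3), while $g_i\in\F_q(X)$ by construction of $G$. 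Hence $A\in\F_q(X)[Z]$. The identity $\widehat\mu_j=\mu_j$ for $j\le T$ is exactly (R3).

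\emph{Laurent expansion.} In $\Omega((Z^{-1}))$ we have $1/(Z-\alpha_s)=\sum_{j\ge0}\alpha_s^jZ^{-j-1}$. Multiplying by $\rho_s$ and summing gives $A/G=\sum_j\widehat\mu_jZ^{-j-1}$ in $\Omega((Z^{-1}))$. Since $A$ and $G$ have coefficients in $\F_q(X)$ and $G$ is monic, the left side already lies in $\F_q(X)((Z^{-1}))$, and the identity holds there by uniqueness of Laurent expansions, as $\F_q(X)((Z^{-1}))\subseteq\Omega((Z^{-1}))$. (Incidentally, this shows every $\widehat\mu_j\in\F_q(X)$.)

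\emph{Coprimality.} This is the only step needing care, and it is short. A common factor of $A$ and $G$ over $\F_q(X)$ would produce a common root in $\Omega$, necessarily some $\alpha_s$. But every term with $t\ne s$ in $A$ contains the factor $Z-\alpha_s$, so
\[
  A(\alpha_s)=\rho_s\prod_{t\ne s}(\alpha_s-\alpha_t)\neq0,
\]
using that the $\rho_s$ are nonzero and the $\alpha$'s are pairwise distinct, both from Lemma~\ref{lem:weighted-reconstruction}. Since the gcd over $\F_q(X)$ is unchanged by extending to $\Omega$, we get $\gcd(A,G)=1$. The main subtlety is just ensuring descent to $\F_q(X)$, which the coefficient formula handles.
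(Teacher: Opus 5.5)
Your proof is correct and follows essentially the same route as the paper: your synthetic-division coefficient formula $\sum_{i=k+1}^{h}g_i\widehat\mu_{i-k-1}$ is the paper's moment form of $A$ with the indices relabelled. The coprimality argument via $A(\alpha_s)=\rho_s\prod_{t\ne s}(\alpha_s-\alpha_t)\ne0$ and the geometric-series Laurent expansion are also the paper's own; your extra remarks (that the moment indices used stay at most $h-1\le T$, and that the expansion is unique when passing between $\Omega((Z^{-1}))$ and $\F_q(X)((Z^{-1}))$) only make explicit details the paper leaves implicit.
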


\begin{proof}
Although \eqref{eq:cauchy-numerator-root-form} is initially written in the
splitting field, it descends to $\F_q(X)[Z]$.  Indeed,
\begin{equation}\label{eq:cauchy-numerator-moment-form}
  A(Z)=\sum_{k=1}^{h}g_k
       \sum_{r=0}^{k-1}\mu_rZ^{k-1-r}.
\end{equation}
To verify this identity, use $G(\alpha_s)=0$ and
\[
  \frac{Z^k-\alpha_s^k}{Z-\alpha_s}
  =\sum_{r=0}^{k-1}Z^{k-1-r}\alpha_s^r,
\]
then sum with weights $\rho_s$.  Formula
\eqref{eq:cauchy-numerator-moment-form} also shows $\deg_ZA<h$.

At a root $\alpha_s$, every summand in
\eqref{eq:cauchy-numerator-root-form} except the $s$th vanishes, while the
$s$th has value
\begin{equation}\label{eq:cauchy-reducedness-at-root}
  A(\alpha_s)
  =\rho_s\prod_{t\ne s}(\alpha_s-\alpha_t)\ne0.
\end{equation}
Indeed,
\[
 \frac{G(Z)}{Z-\alpha_s}
   =\prod_{t\ne s}(Z-\alpha_t).
\]
Substitution of $Z=\alpha_s$ gives the product in
\eqref{eq:cauchy-reducedness-at-root}.  Its factors are nonzero because the
roots are pairwise distinct, and $\rho_s$ is nonzero as well.
Thus $A$ vanishes at none of the roots of $G$.  If $A$ and $G$ had a
nonconstant common divisor, that divisor would divide $G$, split into some
of the displayed linear factors, and force $A(\alpha_s)=0$ for at least one
$s$.  This contradiction proves $\gcd(A,G)=1$.

Finally, as a formal series at infinity,
\[
  \frac{1}{Z-\alpha_s}
  =\sum_{j\ge0}\alpha_s^jZ^{-j-1}.
\]
Summing these expansions proves
\eqref{eq:cauchy-laurent-expansion}.  The reconstruction identities give
$\widehat\mu_j=\mu_j$ exactly through the available index $T$.  Beyond $T$,
\eqref{eq:generated-moments} is the recurrence extension generated by
$G$, not additional code data.  The empty convention makes the same
statements immediate when $h=0$.
\end{proof}

\begin{remark}[Point-count margin]
The strict root-count comparison enforced by the last inequality in
\textup{(R2)} is substantive: the number of maximum-support specializations
must exceed the degree of every cleared recurrence residual \(\mathcal E_r\).  The
proof exposes the exact degree bound
\[
  \deg \mathcal E_r\le d(T-h+h^2).
\]
For the actual maximum support $h$, it would therefore be enough to assume
$|P'|>d(T+2h(h-1))$.  Condition \textup{(R2)} is its uniform version for the
promised bound $h\le\kappa$.
At equality with this actual degree, a nonzero residual can vanish at every
available specialization---for example through a factor such as
\(X^{q-1}-1\) on $\F_q^\times$---and the pointwise recurrences need not assemble into
one identity over \(\F_q(X)\).
\end{remark}

The descended fractions \(A/G\) all lie in \(\F_q(X)(Z)\), so fractions
from different table types can be compared without choosing a common
splitting field.
\section{Polynomial State Modules and Boolean Anchor Incompatibility}
\label{sec:anchors}

The reconstruction lemma produces algebraic nodes over the function field
$\F_q(X)$.  We must still recover the Boolean information encoded by the
shifted moments.  The direct temptation is to evaluate one node at
$X=a_i$, but an algebraic function need not be defined there.  The useful
object is instead the entire monic support polynomial
\[
  G_\tau(Z)=\prod_{s=1}^{h_\tau}(Z-\alpha_{\tau,s}).
\]
Here \(h_\tau\) is the reconstructed support size of type \(\tau\), and
\(\alpha_{\tau,1},\ldots,\alpha_{\tau,h_\tau}\) are the distinct algebraic
nodes supplied by Lemma~\ref{lem:weighted-reconstruction}.
This section proves the coefficient-level congruence
\[
  G_{(C,\beta)}(Z)\equiv (Z-\beta_i)^{h_{(C,\beta)}}
  \pmod{X-a_i}.
  \tag{A0}
\]
The proof is elementary.  A finite sequence of polynomial moment states
generates an increasing chain of $\F_q[X]$-modules.  A determinant degree can
drop only finitely many times, so the chain stabilizes.  At stabilization,
the shift operator has a polynomial matrix, and its characteristic
polynomial is exactly the transformed support polynomial.

\subsection{Transferring the Shifted Identity}

Fix a subtype \(\tau=(C,\beta)\), a variable \(i\in I_C\), and a set
\(P_\tau\subseteq P\) on which the evaluation fibers of type \(\tau\) have support at
most \(\kappa\) and satisfy the hypotheses of weighted reconstruction.

Write
\[
  t=X-a_i.
\]
Assume Lemma~\ref{lem:weighted-reconstruction} has been applied to type
\(\tau\) on \(P_\tau\), and write the reconstructed data as
\[
  \mu_{\tau,j}=\sum_{s=1}^{h}\rho_s\alpha_s^j
  \qquad(0\le j\le T),
  \tag{A1}
\]
inside a splitting field $\Omega$ of the reconstructed support polynomial,
where the $\alpha_s$ are distinct and the $\rho_s$ are nonzero.  In the empty
case $h=0$, we use the convention $G_\tau=1$.

For every $\xi\in P_\tau$, the ordinary and shifted moments come from the same
evaluation-fiber coefficients.  The binomial theorem therefore gives
\begin{equation}\label{eq:shifted-binomial-transfer}
 (\xi-a_i)^j\eta_{\tau,i,j}(\xi)
 =\sum_{r=0}^{j}\binom jr(-\beta_i)^{j-r}\mu_{\tau,r}(\xi).
\end{equation}
The left side is the sum of the weights times $(w-\beta_i)^j$.  This is why
the identity remains valid for arbitrary signed field weights.

The two sides of \eqref{eq:shifted-binomial-transfer}, viewed as
polynomials in $X$, have degree at most $dj$: on the left we use
$j+(d-1)j=dj$, and on the right we use $dr\le dj$.  Consequently, if
\begin{equation}\label{eq:shifted-transfer-root-count}
  dT<|P_\tau|,
\end{equation}
then equality on $P_\tau$ implies the polynomial identity
\begin{equation}\label{eq:shifted-binomial-polynomial}
 t^j\eta_{\tau,i,j}
 =\sum_{r=0}^{j}\binom jr(-\beta_i)^{j-r}\mu_{\tau,r}
 \qquad(0\le j\le T).
\end{equation}
Substituting \textup{(A1)} into the right side and dividing in the splitting
field $\Omega$ used for the reconstructed support polynomial gives
\begin{equation}\label{eq:transformed-moments}
  \eta_{\tau,i,j}=\sum_{s=1}^{h}\rho_s y_s^j,
  \qquad
  y_s:=\frac{\alpha_s-\beta_i}{t}.
\end{equation}
Thus the shifted Reed--Solomon codewords are polynomial moments of the
transformed algebraic nodes $y_s$.

\subsection{The Polynomial-State Stabilization Lemma}

In this subsection, a \emph{polynomial state module} means a full-rank
\(\F_q[X]\)-submodule of \(\F_q[X]^h\).  Full rank means that scalar
extension to \(\F_q(X)\) spans \(\F_q(X)^h\).  Since \(\F_q[X]\) is a
Euclidean domain, every such module is free of rank \(h\).

\begin{lemma}[Polynomial-state stabilization]
\label{lem:polynomial-state-stabilization}
Let $d,h\ge1$.  In a field $\Omega$ containing $\F_q(X)$, let
$y_1,\ldots,y_h$ be distinct and let
$\rho_1,\ldots,\rho_h$ be nonzero.  Suppose that
\begin{equation}\label{eq:state-module-hypothesis}
  \eta_j=\sum_{s=1}^{h}\rho_sy_s^j\in \F_q[X],
  \qquad
  \deg\eta_j\le(d-1)j
  \qquad(0\le j\le T).
\end{equation}
If
\begin{equation}\label{eq:state-module-exact-budget}
  T\ge(d-1)h(h-1)+2h-1,
\end{equation}
then
\begin{equation}\label{eq:state-module-conclusion}
  H(Y):=\prod_{s=1}^{h}(Y-y_s)\in \F_q[X,Y].
\end{equation}
The simpler uniform condition $T\ge dh^2$ implies
\eqref{eq:state-module-exact-budget}.
\end{lemma}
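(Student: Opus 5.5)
The plan follows the module-chain strategy announced at the start of this section.

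\emph{Set-up.} For $k\ge0$ define the state vectors
\[
  \mathbf v_k=(\eta_k,\eta_{k+1},\ldots,\eta_{k+h-1})\in\F_q[X]^h ,
\]
which are available whenever $k+h-1\le T$. Let $N_k$ be the $\F_q[X]$-module generated by $\mathbf v_0,\ldots,\mathbf v_k$.

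\emph{Full rank.} Over $\Omega$ we have $\mathbf v_k=\sum_s\rho_sy_s^k(1,y_s,\ldots,y_s^{h-1})$. Hence the $h\times h$ matrix with rows $\mathbf v_0,\ldots,\mathbf v_{h-1}$ is a Hankel matrix factoring as $\mathbf V_y^{\mathsf T}\operatorname{diag}(\rho)\mathbf V_y$. Its determinant is $\det(\mathbf V_y)^2\prod_s\rho_s\ne0$, exactly as in (R16). So $N_{h-1}$ is already a polynomial state module.

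\emph{Shift matrix over $\F_q(X)$.} The $\F_q(X)$-span of all $\mathbf v_k$ is $\F_q(X)^h$. The shift $\mathbf v_k\mapsto\mathbf v_{k+1}$ is the restriction of the $\Omega$-linear map that acts diagonally by $y_s$ in the basis of the vectors $\rho_s(1,y_s,\ldots,y_s^{h-1})$. Because $\mathbf v_0,\ldots,\mathbf v_{h-1}$ is an $\F_q(X)$-basis with $\mathbf v_k\mapsto\mathbf v_{k+1}$, this map is defined over $\F_q(X)$ by a matrix $\mathbf A$. Its eigenvalues are the $y_s$, so $\det(Y\mathbf I-\mathbf A)=H(Y)$, and in particular $H\in\F_q(X)[Y]$.

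\emph{Integrality via stabilization.} It remains to show that the coefficients are polynomials. Consider the chain $N_{h-1}\subseteq N_h\subseteq\cdots$. Each $N_k$ is free of rank $h$, and $\deg\det$ of a basis matrix is a nonnegative invariant (up to units), which strictly decreases at each strict inclusion. Initially $\deg\det N_{h-1}\le\deg\Delta\le(d-1)\sum_{i,j<h}\ldots=(d-1)h(h-1)$, by the same permutation degree count as (R5) with $d-1$ in place of $d$. So there are at most $(d-1)h(h-1)$ strict increases.

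Stopping the chain after a single equality is not enough; I need one index $k$ with $N_k=N_{k+1}$ and the shift closing on it. I will use the fact that the shift $\mathbf A$ maps $N_k$ into $N_{k+1}$, since it sends the generators $\mathbf v_0,\ldots,\mathbf v_k$ to $\mathbf v_1,\ldots,\mathbf v_{k+1}$. Therefore if $N_k=N_{k+1}$, then $N_{k+1}=N_k+\F_q[X]\mathbf v_{k+1}=N_k$, hence $\mathbf A N_{k+1}\subseteq N_{k+2}$. This actually requires $N_{k+2}=N_{k+1}$, which is where I must be careful.

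The cleaner route is this. If $N_k=N_{k+1}$, then $\mathbf v_{k+1}\in N_k$, say $\mathbf v_{k+1}=\sum_{l\le k}c_l\mathbf v_l$. Applying $\mathbf A$ (legitimate over $\F_q(X)$) gives $\mathbf v_{k+2}=\sum_lc_l\mathbf v_{l+1}\in N_{k+1}=N_k$. By induction, $N_k$ is closed under $\mathbf A$ and equals the full module generated by all available states. So the first equality suffices, and it must occur among indices $k\le h-1+(d-1)h(h-1)$. The states used have top index $k+1+h-1\le(d-1)h(h-1)+2h-1\le T$, which is exactly condition \eqref{eq:state-module-exact-budget}. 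The induction only needs the relation coefficients, not more data.

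\emph{Conclusion.} $\mathbf A$ preserves a free full-rank $\F_q[X]$-lattice $N_k$. In a basis of $N_k$ it is therefore a matrix over $\F_q[X]$, and its characteristic polynomial $H(Y)$ lies in $\F_q[X][Y]$. Finally, $dh^2\ge(d-1)h(h-1)+2h-1$ is routine: the difference is $dh^2-(d-1)h^2+(d-1)h-2h+1=h^2+(d-3)h+1$, which is at least $h^2-2h+1=(h-1)^2\ge0$ since $d\ge1$.

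The main obstacle is the closure step, namely justifying that a single stabilization $N_k=N_{k+1}$ forces $\mathbf A$-invariance. This depends on $\mathbf A$ being $\F_q(X)$-linear on the whole span, which the full-rank step provides. A secondary point is bounding the initial determinant degree by $(d-1)h(h-1)$.
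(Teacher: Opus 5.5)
Your proposal is correct and follows essentially the same route as the paper: the Hankel factorization makes $\mathbf v_0,\ldots,\mathbf v_{h-1}$ an $\F_q(X)$-basis, and the shift is diagonalized over $\Omega$ with characteristic polynomial $H$. The determinant-degree potential, at most $(d-1)h(h-1)$, then forces some $N_k=N_{k+1}$ with $k\le h-1+(d-1)h(h-1)$, which is exactly within the budget $T\ge(d-1)h(h-1)+2h-1$. Your worry about the closure step is unnecessary, and so is the induction through $\mathbf v_{k+2}$: $\mathbf A N_k$ is generated by $\mathbf v_1,\ldots,\mathbf v_{k+1}$, so $\mathbf A N_k\subseteq N_{k+1}=N_k$ immediately, which is how the paper concludes.
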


\begin{proof}
The moment bound supplies the state
\[
  \mathbf v_j=(\eta_j,\eta_{j+1},\ldots,\eta_{j+h-1})^{\mathsf T}
  \in\F_q[X]^h
\]
  for every \(0\le j\le T-h+1\).  The exact budget in
  \eqref{eq:state-module-exact-budget} implies
  \[
    h+(d-1)h(h-1)\le T-h+1.
  \]
  Equivalently, every potentially needed moment through
  \(\eta_{(d-1)h(h-1)+2h-1}\) is available.

The first $h$ states form a Hankel matrix.  If
$\mathbf V_y=(y_s^r)_{0\le r<h,\,1\le s\le h}$, then
\begin{equation}\label{eq:shifted-hankel-factorization}
 \mathbf H_0:=[\mathbf v_0\ \cdots\ \mathbf v_{h-1}]
 =\mathbf V_y\operatorname{diag}(\rho_1,\ldots,\rho_h)\mathbf V_y^{\mathsf T}.
\end{equation}
Hence
\begin{equation}\label{eq:shifted-hankel-determinant}
 \Delta_y:=\det\mathbf H_0
 =\det(\mathbf V_y)^2\prod_{s=1}^{h}\rho_s\ne0.
\end{equation}
The columns \(\mathbf v_0,\ldots,\mathbf v_{h-1}\) are therefore an
\(\F_q(X)\)-basis of \(\F_q(X)^h\).

Let
\(\mathbf S:\F_q(X)^h\to\F_q(X)^h\) be the unique
\(\F_q(X)\)-linear map with
\(\mathbf S\mathbf v_j=\mathbf v_{j+1}\) for \(0\le j<h\).  Equation
\eqref{eq:shifted-hankel-factorization} shows over $\Omega$ that
\begin{equation}\label{eq:state-shift-diagonalization}
  \mathbf S
  =\mathbf V_y\operatorname{diag}(y_1,\ldots,y_h)\mathbf V_y^{-1}.
\end{equation}
It follows that
\(\mathbf S\mathbf v_j=\mathbf v_{j+1}\) for every \(0\le j\le T-h\), and that
\begin{equation}\label{eq:state-char-poly}
  \operatorname{char}_{\mathbf S}(Y)
  =\prod_{s=1}^{h}(Y-y_s)=H(Y).
\end{equation}

It remains to find one polynomial state module preserved by \(\mathbf S\).
Put
\[
  b=\deg\Delta_y.
\]
Every term in the determinant has degree at most
\begin{equation}\label{eq:state-determinant-degree}
  (d-1)\sum_{r=0}^{h-1}(r+\pi(r))
  =(d-1)h(h-1),
\end{equation}
so $b\le(d-1)h(h-1)$.  For $J\ge h-1$, define
\[
  \mathcal M_J
  =\F_q[X]\mathbf v_0+\cdots+\F_q[X]\mathbf v_J
  \subseteq\F_q[X]^h.
\]
These are full-rank polynomial state modules.

For such a module \(\mathcal M\), choose a matrix
\(\mathbf B_{\mathcal M}\) whose columns form an \(\F_q[X]\)-basis, and
define its determinant-degree potential by
\[
  \operatorname{degdet}(\mathcal M)
  :=\deg\det\mathbf B_{\mathcal M}.
\]
Changing the basis multiplies the determinant by a unit of \(\F_q[X]\),
namely a nonzero constant in \(\F_q\), so
\(\operatorname{degdet}(\mathcal M)\) is well defined.  If
\(\mathcal M\subseteq\mathcal M'\), then
\[
  \mathbf B_{\mathcal M}
  =\mathbf B_{\mathcal M'}\mathbf A
  \qquad\text{for some }\mathbf A\in\F_q[X]^{h\times h}.
\]
Here \(\mathbf A\) is \emph{unimodular} if it is invertible over \(\F_q[X]\),
equivalently if \(\det \mathbf A\in\F_q^\times\).  The inclusion is strict exactly
when \(\mathbf A\) is not unimodular.  In that case \(\det \mathbf A\) is nonconstant, and
therefore
\begin{equation}\label{eq:determinant-index-drop}
  \operatorname{degdet}(\mathcal M')
  \le\operatorname{degdet}(\mathcal M)-1.
\end{equation}

Now \(\operatorname{degdet}(\mathcal M_{h-1})=b\).  The chain
\[
  \mathcal M_{h-1}\subseteq\mathcal M_h\subseteq\cdots
  \subseteq\mathcal M_{h+b}
\]
has $b+1$ successive inclusions.  They cannot all be strict, because the
nonnegative integer \(\operatorname{degdet}\) can drop at most \(b\) times.
Thus for some
$J\in\{h-1,\ldots,h+b-1\}$,
\[
  \mathcal M_J=\mathcal M_{J+1}.
\]
Using \(\mathbf S\mathbf v_j=\mathbf v_{j+1}\) on the displayed generators, we obtain
\[
  \mathbf S(\mathcal M_J)
  \subseteq\mathcal M_{J+1}=\mathcal M_J.
\]
In an \(\F_q[X]\)-basis of the free module \(\mathcal M_J\), the map
\(\mathbf S\) consequently has a matrix over \(\F_q[X]\).  Its
characteristic polynomial lies in \(\F_q[X,Y]\) and, by
\eqref{eq:state-char-poly}, equals \(H(Y)\).

For the uniform bound in the lemma statement, compute
\[
 dh^2-\bigl((d-1)h(h-1)+2h-1\bigr)
 =h^2+(d-3)h+1\ge0.
\]
For $d=1$ the right side is $(h-1)^2$, and for $d\ge2$ it is nonnegative
for every integer $h\ge1$.
\end{proof}

\subsection{Support-Polynomial Congruence at an Anchor}

We now apply the state lemma to \eqref{eq:transformed-moments}.

\begin{lemma}[Boolean support-polynomial congruence]
\label{lem:anchor-support-congruence}
Let $\tau=(C,\beta)$ and $i\in I_C$.  Suppose weighted reconstruction for
$\tau$ has support size $h\le\kappa$, the ordinary and shifted data arise
from the same retained evaluation fibers, and
\begin{equation}\label{eq:anchor-congruence-hypotheses}
  dT<|P_\tau|,
  \qquad T\ge d\kappa^2.
\end{equation}
Then the monic reconstructed support polynomial satisfies
\begin{equation}\label{eq:anchor-support-polynomial}
  G_\tau(Z):=\prod_{s=1}^{h}(Z-\alpha_s)\in \F_q[X,Z]
\end{equation}
and
\begin{equation}\label{eq:anchor-support-congruence}
  G_\tau(Z)\equiv(Z-\beta_i)^h\pmod{X-a_i}.
\end{equation}
For $h=0$, both statements hold with $G_\tau=1$.
\end{lemma}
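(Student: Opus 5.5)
The case $h=0$ needs no argument, so assume $h\ge1$.

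First, transfer the shifted identity. The hypothesis $dT<|P_\tau|$ is exactly \eqref{eq:shifted-transfer-root-count}. So \eqref{eq:shifted-binomial-polynomial} holds as a polynomial identity. Substituting the reconstruction (A1) then yields \eqref{eq:transformed-moments}: $\eta_{\tau,i,j}=\sum_s\rho_s y_s^j$ with $y_s=(\alpha_s-\beta_i)/t$ and $t=X-a_i$. Each $\eta_{\tau,i,j}$ is the representing polynomial of \eqref{eq:shifted-rs}, so it lies in $\F_q[X]$ with degree at most $(d-1)j$.

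Second, check the hypotheses of Lemma~\ref{lem:polynomial-state-stabilization} and apply it.
\begin{itemize}
\item The $y_s$ are distinct, because the $\alpha_s$ are distinct and $y\mapsto \beta_i+ty$ is injective ($t\neq0$ in $\F_q(X)$).
\item The $\rho_s$ are nonzero by Lemma~\ref{lem:weighted-reconstruction}.
\item The budget holds: $T\ge d\kappa^2\ge dh^2$, which is the uniform condition of the lemma.
\end{itemize}
The lemma gives $H(Y)=\prod_s(Y-y_s)\in\F_q[X,Y]$. As a polynomial in $Y$ it is monic of degree $h$, and its coefficient of $Y^{h-k}$ is a polynomial in $X$.

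Third, relate $G_\tau$ to $H$. Since $Z-\alpha_s=t\bigl((Z-\beta_i)/t-y_s\bigr)$, we get
\[
G_\tau(Z)=t^hH\!\left(\frac{Z-\beta_i}{t}\right)=\sum_{k=0}^h H_k(X)\,t^{k}(Z-\beta_i)^{h-k},
\]
where $H(Y)=\sum_k H_kY^{h-k}$, $H_0=1$, and $H_k\in\F_q[X]$. Every term of the right side is in $\F_q[X,Z]$, which proves \eqref{eq:anchor-support-polynomial}. Reducing modulo $t=X-a_i$ kills every term with $k\ge1$ and leaves $(Z-\beta_i)^h$, which is \eqref{eq:anchor-support-congruence}. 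The leading coefficient $H_0=1$ keeps the surviving term monic.

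The main obstacle is the integrality of $H$. The $y_s$ may have poles at $t=0$, so one cannot argue root by root. That difficulty is handled entirely by the stabilization lemma. Here the only care needed is to confirm that its polynomiality and degree hypotheses hold for the shifted moments, and these come from \eqref{eq:shifted-rs} and the uniqueness guaranteed by \eqref{eq:moment-budget}.
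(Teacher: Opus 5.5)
Your proposal is correct and follows the paper's proof essentially step for step. You transfer the binomial identity to a polynomial identity using $dT<|P_\tau|$, apply the polynomial-state stabilization lemma to the shifted moments with the budget $T\ge d\kappa^2\ge dh^2$, and expand $G_\tau(Z)=t^hH\bigl((Z-\beta_i)/t\bigr)$ to read off both polynomiality and the congruence modulo $t$. Your explicit check that the $y_s$ are distinct is a detail the paper leaves implicit.
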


\begin{proof}
Assume $h\ge1$.  We first justify
\eqref{eq:transformed-moments} explicitly:
for every \(j\le T\), the two sides of
\eqref{eq:shifted-binomial-transfer} agree at all points of \(P_\tau\).
After replacing the pointwise variable by \(X\), both sides have degree at
most \(dj\).  Since \(dj\le dT<|P_\tau|\), they agree as polynomials.  The shifted
moments satisfy the hypotheses of
Lemma~\ref{lem:polynomial-state-stabilization},
because $h\le\kappa$ and $T\ge d\kappa^2\ge dh^2$.  Thus
\[
  H_i(Y):=\prod_{s=1}^{h}
  \left(Y-\frac{\alpha_s-\beta_i}{X-a_i}\right)
  \in \F_q[X,Y].
\]
Writing $t=X-a_i$ gives the identity
\begin{equation}\label{eq:G-from-H}
  G_\tau(Z)=t^hH_i\!\left(\frac{Z-\beta_i}{t}\right).
\end{equation}
If $H_i(Y)=Y^h+\sum_{r<h}b_rY^r$ with $b_r\in \F_q[X]$, then
\[
  G_\tau(Z)
  =(Z-\beta_i)^h+
    \sum_{r<h}b_rt^{h-r}(Z-\beta_i)^r.
\]
This proves both polynomiality and the asserted congruence modulo $t$.
\end{proof}

The whole support polynomial therefore carries the anchor information,
independently of any choice of its algebraic roots.

\begin{corollary}[Incompatible Boolean labels have disjoint roots]
\label{cor:anchor-incompatibility}
Let $\tau=(C,\beta)$ and $\tau'=(C',\beta')$ be two reconstructed subtypes
that both contain variable $i$.  If $\beta_i\ne\beta'_i$, then
\[
  \gcd_{\F_q(X)[Z]}(G_\tau,G_{\tau'})=1.
\]
Equivalently, their root sets are disjoint in every common algebraic
closure of $\F_q(X)$.
\end{corollary}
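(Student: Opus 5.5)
By Lemma~\ref{lem:anchor-support-congruence}, both $G_\tau$ and $G_{\tau'}$ lie in $\F_q[X,Z]$, are monic in $Z$, and satisfy
\[
  G_\tau\equiv (Z-\beta_i)^{h}\pmod{X-a_i},
  \qquad
  G_{\tau'}\equiv (Z-\beta'_i)^{h'}\pmod{X-a_i}.
\]
Here $h,h'$ are the reconstructed support sizes. If either $h$ or $h'$ is $0$, that polynomial is $1$ and the gcd is trivially $1$, so I will assume $h,h'\ge1$. The plan is to show that the resultant $\operatorname{Res}_Z(G_\tau,G_{\tau'})\in\F_q[X]$ is nonzero by specializing it at $X=a_i$.

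First I will use that, for polynomials monic in $Z$, the Sylvester resultant is a polynomial in the coefficients with integer coefficients. It therefore commutes with the substitution $X=a_i$, and the degrees do not drop under specialization because the leading coefficients are $1$. The specialization is
\[
  \operatorname{Res}_Z\bigl((Z-\beta_i)^h,(Z-\beta'_i)^{h'}\bigr)
  =(\beta_i-\beta'_i)^{hh'},
\]
up to sign, by the product formula over roots. Since $\beta_i,\beta'_i\in\{0,1\}$ are distinct, this value is $\pm1\neq0$ in $\F_q$. Hence the resultant is a nonzero element of $\F_q[X]\subseteq\F_q(X)$.

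Next I will invoke the standard fact that, over the field $\F_q(X)$, two polynomials with nonzero leading coefficients have a nonconstant common factor if and only if their resultant vanishes. This gives $\gcd_{\F_q(X)[Z]}(G_\tau,G_{\tau'})=1$.

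For the equivalent root formulation, I will take a common algebraic closure. There the resultant equals $\prod_{s,s'}(\alpha_s-\alpha'_{s'})$, so a nonzero resultant means no root of $G_\tau$ coincides with a root of $G_{\tau'}$. Conversely, disjoint root sets means no linear factor is shared, which again says the gcd is $1$.

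The only delicate point is justifying that resultant formation commutes with specialization. This holds precisely because both polynomials are monic in $Z$ with coefficients in $\F_q[X]$, which is the polynomiality conclusion of Lemma~\ref{lem:anchor-support-congruence}. No division by a quantity that might vanish at $a_i$ occurs.
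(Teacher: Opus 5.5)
Your proposal is correct and follows the paper's proof essentially verbatim. Like the paper, you specialize $\operatorname{Res}_Z(G_\tau,G_{\tau'})$ at $X=a_i$ using the congruences of Lemma~\ref{lem:anchor-support-congruence}, obtain $(\beta_i-\beta'_i)^{hh'}\neq0$, and conclude coprimality over $\F_q(X)$. You also make explicit a point the paper leaves implicit: monicity in $Z$ ensures that the specialized Sylvester determinant really is the resultant of the specialized polynomials.
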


\begin{proof}
The assertion is immediate if one polynomial is $1$.  Otherwise let their
degrees be $h,h'>0$.  Recall that the resultant is a polynomial expression
in the coefficients of two polynomials, and it is zero exactly when those
polynomials have a common root over an algebraic closure
\cite{cox-little-oshea-2015}.  Because it is a
coefficient polynomial, reduction modulo \(X-a_i\) commutes with taking the
resultant.  Lemma
\ref{lem:anchor-support-congruence} gives
\begin{align*}
 \operatorname{Res}_Z(G_\tau,G_{\tau'})\bmod(X-a_i)
 &=\operatorname{Res}_Z((Z-\beta_i)^h,
                        (Z-\beta'_i)^{h'})\\
 &=(\beta_i-\beta'_i)^{hh'}\ne0.
\end{align*}
The displayed specialization is nonzero, so the resultant polynomial itself
cannot be zero in \(\F_q[X]\).  Hence the two
polynomials are coprime over $\F_q(X)$.  The last inequality uses
$\beta_i,\beta'_i\in\{0,1\}$ and the fact that $\F_q$ has odd characteristic.
\end{proof}

\section{From Local Roots to a Global Assignment}\label{sec:decoding}

The reconstruction and anchor lemmas have a deliberately uniform
quantifier: they apply to arbitrary nonzero field weights.  We now use them
to decode every sufficiently sparse nonzero codeword, not merely an intended
one-hot witness.

For $\mathbf u\in\cC_\varphi$ define the support in an evaluation fiber by
\[
  S_\tau(\xi)=\{w\in \F_q:\mathbf u_{\tau,\xi,w}\ne0\},
\]
and regard $\mathbf u_{\tau,\xi,w}\in \F_q^\times$ as the weight of node $w$.  Set
\begin{equation}\label{eq:sparse-threshold}
  S_0=4M^\delta R.
\end{equation}

\begin{proposition}[Sparse-code dichotomy]\label{prop:sparse-dichotomy}
Suppose $\varphi$ is unsatisfiable.  If
$0\ne \mathbf u\in\cC_\varphi$ and $\wt(\mathbf u)\le S_0$, then every
nonempty evaluation fiber obeys
\begin{equation}\label{eq:heavy-fiber-conclusion}
  |S_\tau(\xi)|\ge T+2.
\end{equation}
\end{proposition}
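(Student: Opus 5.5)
We prove the contrapositive, following the four-step outline of Section~\ref{sec:techdecoding}.  Assume $0\ne\mathbf u\in\cC_\varphi$, $\wt(\mathbf u)\le S_0$, and some nonempty fiber $(\tau_0,\xi_0)$ has $1\le|S_{\tau_0}(\xi_0)|\le T+1$.  We will construct a satisfying assignment of $\varphi$.

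\emph{Step 1: reconstruction on the retained points.}  For each type $\tau$, set $P_\tau=\{\xi\in P:|S_\tau(\xi)|\le\kappa\}$.  Counting supported coordinates gives $|P\setminus P_\tau|\le S_0/\kappa$.  Using $M<40S^D$, $\delta D=c-2$, $R\le 8S|P|$ and $\kappa=S^c$, this is at most $32\cdot40^\delta S^{c-2+1-c}|P|$, which is below $|P|/10$ since $S\ge102$.  Hence $|P_\tau|>\tfrac9{10}|P|$.  This exceeds both $d(T+2\kappa(\kappa-1))\le 3S^{2c+2}$ and $dT$, because $|P|>S^{2c+3}/2$.  The other hypothesis $T\ge 2\kappa-1$ of (R2) is immediate.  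Lemma~\ref{lem:weighted-reconstruction} therefore applies to each type on $P_\tau$, with the representing polynomials $\mu_{\tau,j}$ from \eqref{eq:ordinary-rs}.  It yields support polynomials $G_\tau$ of degree $h_\tau\le\kappa$.  Corollary~\ref{cor:reduced-cauchy-fraction} then gives reduced Cauchy fractions $A_\tau/G_\tau$ whose Laurent coefficients through index $T$ are $\mu_{\tau,j}$.

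\emph{Step 2: anchors.}  By \eqref{eq:moment-budget} and the bound $|P_\tau|>dT$, Lemma~\ref{lem:anchor-support-congruence} gives, for every subtype $(C,\beta)$ and $i\in I_C$,
\[
  G_{(C,\beta)}\in\F_q[X,Z],
  \qquad
  G_{(C,\beta)}\equiv(Z-\beta_i)^{h}\pmod{X-a_i}.
\]
Corollary~\ref{cor:anchor-incompatibility} then makes subtypes with conflicting values on a shared variable coprime.

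\emph{Step 3: clause factorization.}  Fix a clause $C$.  The clause constraint \eqref{eq:clause-decomposition} is linear in the coordinates, so it gives $\mu_{\star,j}=\sum_\beta\mu_{(C,\beta),j}$ as polynomials.  It suffices to check this on $P$, where it holds pointwise and the degrees are below $|P|$.  Consequently
\[
  \frac{A_\star}{G_\star}-\sum_\beta\frac{A_{(C,\beta)}}{G_{(C,\beta)}}
\]
is a proper rational function in $Z$ whose first $T+1$ Laurent coefficients vanish.  Its denominator has degree at most $9\kappa<T+1$.  Multiplying by the common denominator produces a polynomial of degree less than $9\kappa$ that is $O(Z^{-(T+1)+9\kappa})$, hence zero, so the identity is exact.

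Distinct subtypes of the same clause differ in some coordinate $\beta_i$, so their denominators are pairwise coprime.  The sum of reduced fractions with pairwise coprime denominators is reduced with the product denominator.  Comparing with the reduced fraction $A_\star/G_\star$ gives $G_\star=\prod_{\beta\in\mathrm{Sat}(C)}G_{(C,\beta)}$ up to a unit, and monicity removes the unit.

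\emph{Step 4: selecting the assignment.}  First, suppose $G_\star=1$.  Then every $G_{(C,\beta)}=1$, so all $\mu_{\tau,j}=0$ for $j\le T$.  In particular, at $\xi_0$ the moments $\sum_{w}\mathbf u_{\tau_0,\xi_0,w}w^j$ vanish for $0\le j\le T$.  The fiber has $k\le T+1$ distinct nodes, so the $k\times k$ Vandermonde system is invertible and forces all weights to be zero, a contradiction.  Note that $\xi_0$ need not lie in $P_{\tau_0}$: the vanishing is a polynomial identity, so it holds at every point of $P$.

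So take an irreducible factor $\pi$ of $G_\star$ in $\F_q(X)[Z]$.  By Step~3 and pairwise coprimality, for each clause $C$ the factor $\pi$ divides exactly one $G_{(C,\beta_C)}$.  Suppose two selected subtypes sharing a variable $i$ disagreed there.  Both would be divisible by $\pi$, contradicting Corollary~\ref{cor:anchor-incompatibility}.  Hence the local assignments $\beta_C$ agree on shared variables and glue to a global assignment; unused variables were deleted earlier.  This assignment satisfies every clause, since each $\beta_C\in\mathrm{Sat}(C)$.

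The main obstacle is the parameter bookkeeping in Step~1, namely showing $S_0/\kappa$ is small relative to $|P|$.  It relies on $\delta D=c-2$ exactly cancelling against $\kappa=S^c$, up to a factor $S^{-1}$ times a constant.  We would verify this inequality carefully first.
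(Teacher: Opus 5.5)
Your proposal is correct and follows the paper's proof essentially step for step. You retain the points with fiber support at most $\kappa$, apply weighted reconstruction and the anchor congruences, and turn the clause moment identities into the exact factorization $G_\star=\prod_{\beta}G_{(C,\beta)}$. You then split on whether $G_\star=1$, which gives the Vandermonde contradiction at the light fiber (valid at any $\xi_0\in P$ since the moment polynomials vanish identically), or $G_\star$ has an irreducible factor, which selects a consistent satisfying assignment.

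One small slip in Step~1 is harmless. In fact $R=(m+1)|P|\le S|P|$, and with your lossier $R\le 8S|P|$ the constant $32\cdot40^\delta/S$ is not below $1/10$ when $S$ is near $102$. The sharper bound gives the paper's $4\cdot40^\delta/S<1/10$. In any case, retaining any fixed fraction of $P$ bounded away from $0$ already keeps $|P_\tau|$ far above $d\bigl(T+2\kappa(\kappa-1)\bigr)=O(S^{2c+2})$.
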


\begin{proof}
For each type retain the evaluation points
\begin{equation}\label{eq:good-points}
  P_\tau=\{\xi\in P:|S_\tau(\xi)|\le\kappa\}.
\end{equation}
Every discarded evaluation fiber contributes more than $\kappa$ supported coordinates.
Moreover, $M<40S^D$ by \eqref{eq:R-over-g}, while
$R=(m+1)|P|<Sq$ by \eqref{eq:M-g-R}.  Thus, using
\eqref{eq:sparse-threshold} and $\zeta+1-c=-1$,
\begin{equation}\label{eq:discarded-points}
  |P\setminus P_\tau|
  \le\frac{4M^\delta R}{\kappa}
  <4\,40^\delta qS^{-1}<\frac q{10}.
\end{equation}
Here $D=4c+7$ gives $\delta<1/4$, hence
$40^\delta<40^{1/4}<51/20$.  A nontrivial preprocessed formula has
$S\ge102$, and therefore
\[
  \frac{4\,40^\delta}{S}
  <\frac{4(51/20)}{102}=\frac1{10}.
\]
This is the only place where the actual lower bound $S\ge102$, rather than
the rounder but weaker statement $S\ge100$, is needed.

The retained points meet every hypothesis of weighted reconstruction, since
\begin{align}
 |P_\tau|
   &>q-n-q/10>\frac{67q}{80},
   \label{eq:reconstruction-left-margin}\\
 d\bigl(T+2\kappa(\kappa-1)\bigr)
   &<q\left(\frac4S+\frac{2}{S^2}\right)<\frac q{25},
   \label{eq:reconstruction-right-margin}\\
 dT&<|P_\tau|,
 \qquad T\ge2\kappa-1.                \label{eq:reconstruction-other-margin}
\end{align}
For the second line, use $d\le S$, $T=4S^{2c+1}$,
$\kappa=S^c$, and $q\ge S^{2c+3}$.  Its last strict inequality follows from
\[
  \frac4S\le\frac2{51},
  \qquad
  \frac{2}{S^2}<\frac1{1275},
  \qquad
  \frac2{51}+\frac1{1275}=\frac1{25}.
\]
Also $n<q/16$, which gives the first line.  Thus the first two lines are
precisely the strict reconstruction condition
$|P_\tau|>d(T+2\kappa(\kappa-1))$.  The third line records its useful
consequence $dT<|P_\tau|$ and the separate moment-availability condition.

Apply Lemma~\ref{lem:weighted-reconstruction} separately to every type.
For type $\tau$, let $h_\tau\le\kappa$ be its reconstructed support size,
and use Corollary~\ref{cor:reduced-cauchy-fraction} to write
\begin{equation}\label{eq:type-cauchy-data}
  \mathcal R_\tau(Z):=\frac{A_\tau(Z)}{G_\tau(Z)},
  \qquad
  \deg_ZA_\tau<h_\tau=\deg_ZG_\tau,
  \qquad \gcd(A_\tau,G_\tau)=1.
\end{equation}
When $h_\tau=0$, this means $(A_\tau,G_\tau)=(0,1)$.  The generated
moments in the Laurent expansion of $\mathcal R_\tau$ agree with
$\mu_{\tau,j}$ for every $0\le j\le T$.  Different types may be
reconstructed in different splitting fields.  All of the displayed Cauchy
fractions themselves lie in the one rational-function field $\F_q(X)(Z)$.

We use one elementary fact about formal Laurent series.  Let
\(G(Z)\in\F_q(X)[Z]\) be monic of degree \(d_G\), and let
\(A(Z)\) have degree below \(d_G\).  The proper fraction \(A/G\) has a unique
formal expansion
\[
  \frac{A(Z)}{G(Z)}
  =\sum_{j\ge0}\ell_jZ^{-j-1}
  \quad\text{in }\F_q(X)((Z^{-1})).
\]
If \(\ell_0,\ldots,\ell_{d_G-1}\) all vanish, then \(A=0\).  Indeed, if
\(A\ne0\) has degree \(e<d_G\), the leading term of \(A/G\) is a nonzero
multiple of \(Z^{-(d_G-e)}\), one of those first \(d_G\) negative powers.
This is a formal coefficient argument.  No analytic convergence is involved.

For each subtype, take the evaluation set in
Lemma~\ref{lem:anchor-support-congruence} to be $P_\tau$.  The ordinary and
shifted polynomials are the unique global polynomials from
Section~\ref{sec:code}, restricted to $P_\tau$, and
$dT<|P_\tau|$ is the third margin in
\eqref{eq:reconstruction-other-margin}.  Thus the ordinary and shifted data
come from the same evaluation-fiber coefficients, as that lemma requires.  Since
$T=4S^{2c+1}\ge d\kappa^2$, we obtain
\begin{equation}\label{eq:subtype-support-congruence}
  G_{(C,\beta)}(Z)\equiv
  (Z-\beta_i)^{h_{(C,\beta)}}\pmod{X-a_i}
  \qquad(i\in I_C).
\end{equation}

Fix a clause $C$.  Summing \eqref{eq:clause-decomposition} against $w^j$
and using uniqueness of the degree-$dj$ representing polynomials gives
\begin{equation}\label{eq:clause-moment-equality}
  \mu_{\star,j}=\sum_{\beta\in \mathrm{Sat}(C)}\mu_{(C,\beta),j}
  \qquad(0\le j\le T).
\end{equation}
We first turn this finite list into an exact rational-function identity.
The common denominator of the difference is
\[
  \mathcal D_C(Z)=G_\star(Z)
      \prod_{\beta\in \mathrm{Sat}(C)}G_{(C,\beta)}(Z),
\]
whose degree is
\begin{equation}\label{eq:clause-total-denominator-degree}
  m_C=h_\star+\sum_{\beta\in \mathrm{Sat}(C)}h_{(C,\beta)}
  \le(1+|\mathrm{Sat}(C)|)\kappa\le9\kappa<T+1.
\end{equation}
After multiplication by $\mathcal D_C$, the difference of the proper
fractions has numerator degree below $m_C$.  By
\eqref{eq:clause-moment-equality}, its Laurent coefficients at
$Z^{-1},\ldots,Z^{-T-1}$ vanish.  If $m_C=0$, every fraction is already
zero.  Otherwise the first $m_C$ coefficients vanish, and the finite
Laurent-determination principle just proved gives
\begin{equation}\label{eq:clause-cauchy-identity}
  \frac{A_\star}{G_\star}
  =\sum_{\beta\in \mathrm{Sat}(C)}
     \frac{A_{(C,\beta)}}{G_{(C,\beta)}}.
\end{equation}

We next determine the denominator on the right, including the possible
cancellation issue.  Corollary~\ref{cor:anchor-incompatibility} makes the
subtype denominators pairwise coprime: two distinct satisfying assignments
for $C$ differ on at least one variable.  Put
\[
  G_C^{\mathrm{tot}}=\prod_{\beta\in \mathrm{Sat}(C)}G_{(C,\beta)},
  \qquad
  A_C^{\mathrm{tot}}=\sum_{\beta\in \mathrm{Sat}(C)}
       A_{(C,\beta)}\frac{G_C^{\mathrm{tot}}}{G_{(C,\beta)}}.
\]
Let $\mathfrak p\in \F_q(X)[Z]$ be an irreducible divisor of one denominator
$G_{(C,\beta)}$.  Modulo $\mathfrak p$, every summand of $A_C^{\mathrm{tot}}$ except the
$\beta$th vanishes.  The surviving term is nonzero modulo $\mathfrak p$:
reducedness gives
$\mathfrak p\nmid A_{(C,\beta)}$, and pairwise coprimality gives
\[
  \mathfrak p\nmid\prod_{\beta'\ne\beta}G_{(C,\beta')}.
\]
Thus no irreducible divisor of $G_C^{\mathrm{tot}}$ divides $A_C^{\mathrm{tot}}$, so
\begin{equation}\label{eq:no-denominator-cancellation}
  \gcd(A_C^{\mathrm{tot}},G_C^{\mathrm{tot}})=1.
\end{equation}
The right side of \eqref{eq:clause-cauchy-identity} therefore has reduced
monic denominator exactly $G_C^{\mathrm{tot}}$.  The left side has reduced monic
denominator $G_\star$, including the convention $0=0/1$.
Cross-multiplication gives
\[
  A_\star G_C^{\mathrm{tot}}=A_C^{\mathrm{tot}}G_\star.
\]
Coprimality on each side gives both divisibilities
$G_\star\mid G_C^{\mathrm{tot}}$ and
$G_C^{\mathrm{tot}}\mid G_\star$.  Monicity therefore proves
\begin{equation}\label{eq:support-polynomial-factorization}
  G_\star=\prod_{\beta\in \mathrm{Sat}(C)}G_{(C,\beta)}
  \qquad\text{for every clause }C.
\end{equation}

Suppose $G_\star\ne1$ and choose a monic irreducible factor $\mathfrak p$
of $G_\star$ in $\F_q(X)[Z]$.  For each clause, factorization
\eqref{eq:support-polynomial-factorization} puts $\mathfrak p$ in exactly one subtype
denominator, because those denominators are pairwise coprime.  This selects
a satisfying local assignment $\beta(C)$.  If the assignments selected in
two clauses disagreed on a shared variable, their two support polynomials
would both contain $\mathfrak p$, contradicting
Corollary~\ref{cor:anchor-incompatibility}.  The selected local assignments
therefore agree wherever their domains overlap.  Preprocessing removed
unused variables, so every remaining variable occurs in at least one clause.
The mutually consistent local assignments consequently define a total
assignment to all remaining variables, and that assignment satisfies every
clause.  This contradicts the hypothesis on \(\varphi\).

It follows that $G_\star=1$.  Equation
\eqref{eq:support-polynomial-factorization}, monicity, and degree additivity
then force every subtype support polynomial to equal $1$.  Hence every
reconstructed support is empty and every reconstructed moment through $T$
vanishes:
\begin{equation}\label{eq:all-generic-moments-zero}
  \mu_{\tau,j}=0
  \qquad(\tau\in\Theta,\ 0\le j\le T).
\end{equation}

Finally, suppose an original evaluation fiber contains
$1\le h_{\mathrm f}\le T+1$ distinct
supported nodes $w_1,\ldots,w_{h_{\mathrm f}}$ with nonzero weights
$b_1,\ldots,b_{h_{\mathrm f}}$.  Evaluating
\eqref{eq:all-generic-moments-zero} there for
$j=0,\ldots,h_{\mathrm f}-1$ means, explicitly,
\[
  0=\mu_{\tau,j}(\xi)
   =\sum_{s=1}^{h_{\mathrm f}}b_sw_s^j.
\]
The first equality uses the globally vanishing representing polynomial, and
the second returns to the definition of the moment in the original
evaluation fiber.
Writing these equations simultaneously gives
\[
  (w_s^j)_{0\le j<h_{\mathrm f},1\le s\le h_{\mathrm f}}
  (b_s)_{s=1}^{h_{\mathrm f}}=0.
\]
The Vandermonde matrix is invertible, contradicting $b_s\ne0$.  Hence every
nonempty evaluation fiber has at least $T+2$ supported coordinates.
\end{proof}

\section{From the Code to a Lattice}\label{sec:lattice}

Two geometric operations finish the reduction.  A large-modulus
Construction-A lift separates nonzero integer vectors whose residue is zero.  A
partial Hadamard map then turns support information about a nonzero residue
into an $\ell_p$ lower bound.

\subsection{Construction A, with Evaluation-Fiber Order Restored}

Following the standard Construction-A lift~\cite{conway-sloane-1999}, define
\begin{equation}\label{eq:construction-a-lattice}
  \Lambda_\varphi
  =\{\mathbf z\in\Z^M:\mathbf z\bmod q\in\cC_\varphi\}.
\end{equation}
Put $k=\dim_{\F_q}\cC_\varphi$.  Gaussian elimination supplies a permutation
matrix $\boldsymbol{\Pi}$ and a systematic description: after pivot columns are moved
first, row reduction gives a generator of the form
\((\mathbf I_k\mid \mathbf P_{\rm sys})\), where
\(\mathbf P_{\rm sys}\in\F_q^{k\times(M-k)}\),
so
\[
  \cC_{\rm sys}=\cC_\varphi\boldsymbol{\Pi}
  =\{(\mathbf f,\mathbf f\mathbf P_{\rm sys}):\mathbf f\in \F_q^k\}.
\]
Lift the entries of $\mathbf P_{\rm sys}$ to $\{0,\ldots,q-1\}$ and set
\begin{equation}\label{eq:construction-a-basis}
  \mathbf B_{\rm sys}=
  \begin{pmatrix}
    \mathbf I_k&\widetilde{\mathbf P}_{\rm sys}\\
    0&q\mathbf I_{M-k}
  \end{pmatrix},
  \qquad
  \mathbf B_A=\mathbf B_{\rm sys}\boldsymbol{\Pi}^{\mathsf T}.
\end{equation}
Right multiplication by \(\boldsymbol{\Pi}\) sends a row vector in original coordinate
order to systematic order.  Right multiplication by
\(\boldsymbol{\Pi}^{\mathsf T}=\boldsymbol{\Pi}^{-1}\) sends it back.

\begin{lemma}[Exact Construction-A basis]\label{lem:construction-a-basis}
The row lattice of $\mathbf B_A$ is $\Lambda_\varphi$, and
$|\det \mathbf B_A|=q^{M-k}\ne0$.
\end{lemma}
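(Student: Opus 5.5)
I would first work entirely in systematic coordinates and only undo the permutation at the end. Put
\[
  \Lambda_{\rm sys}=\{\mathbf z\in\Z^M:\mathbf z\bmod q\in\cC_{\rm sys}\}.
\]
Since $\boldsymbol{\Pi}$ is a permutation matrix, right multiplication by $\boldsymbol{\Pi}$ is a bijection of $\Z^M$ that commutes with reduction modulo $q$. Hence $\Lambda_{\rm sys}=\Lambda_\varphi\boldsymbol{\Pi}$, and equivalently $\Lambda_\varphi=\Lambda_{\rm sys}\boldsymbol{\Pi}^{\mathsf T}$. It therefore suffices to prove two facts: the row lattice of $\mathbf B_{\rm sys}$ is exactly $\Lambda_{\rm sys}$, and $|\det\mathbf B_{\rm sys}|=q^{M-k}$. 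The determinant of a permutation matrix is $\pm1$, which transfers both facts to $\mathbf B_A$.

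\emph{The row lattice lies in $\Lambda_{\rm sys}$.} I check the generators. A row of the top block is $(\mathbf e_r,\widetilde{\mathbf P}_{{\rm sys},r})$. Its reduction modulo $q$ is the codeword $(\mathbf e_r,\mathbf e_r\mathbf P_{\rm sys})$, because $\widetilde{\mathbf P}_{\rm sys}$ reduces to $\mathbf P_{\rm sys}$ entrywise. A row of the bottom block reduces to $0$. The set $\Lambda_{\rm sys}$ is closed under integer combinations, since $\cC_{\rm sys}$ is an $\F_q$-subspace and reduction modulo $q$ is additive. So every integer combination of rows stays in $\Lambda_{\rm sys}$.

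\emph{Every vector of $\Lambda_{\rm sys}$ lies in the row lattice.} Take $\mathbf z=(\mathbf a,\mathbf b)\in\Lambda_{\rm sys}$ with $\mathbf a\in\Z^k$. Its residue has the form $(\mathbf f,\mathbf f\mathbf P_{\rm sys})$, and here $\mathbf f=\mathbf a\bmod q$. Subtract $\mathbf a$ times the top block, which is an integer combination:
\[
  \mathbf z-\mathbf a\,(\mathbf I_k\mid\widetilde{\mathbf P}_{\rm sys})
  =(0,\ \mathbf b-\mathbf a\widetilde{\mathbf P}_{\rm sys}).
\]
The second component is an integer vector that is congruent to $\mathbf f\mathbf P_{\rm sys}-\mathbf f\mathbf P_{\rm sys}=0$ modulo $q$. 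It is therefore $q$ times an integer vector, so it is an integer combination of the rows of $(0\mid q\mathbf I_{M-k})$. This proves equality of the two lattices.

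\emph{Determinant.} $\mathbf B_{\rm sys}$ is block upper triangular, so its determinant is $\det\mathbf I_k\cdot\det(q\mathbf I_{M-k})=q^{M-k}$. This is nonzero, so the rows are linearly independent and $\mathbf B_A$ is a genuine basis of full rank $M$. Then $|\det\mathbf B_A|=|\det\mathbf B_{\rm sys}|\cdot|\det\boldsymbol{\Pi}^{\mathsf T}|=q^{M-k}$.

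I expect no real obstacle here. The only point needing care is that residues may be taken with arbitrary integer representatives, so the reverse inclusion must use $\mathbf a$ itself rather than its canonical lift. The argument above does exactly that.
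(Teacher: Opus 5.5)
Your proof is correct and follows the paper's argument essentially step for step. The forward inclusion reduces the generators modulo $q$; the paper instead computes $(\mathbf u,\mathbf v)\mathbf B_{\rm sys}=(\mathbf u,\mathbf u\widetilde{\mathbf P}_{\rm sys}+q\mathbf v)$ directly. The reverse inclusion takes $\mathbf u=\mathbf a$ and $\mathbf v=(\mathbf b-\mathbf a\widetilde{\mathbf P}_{\rm sys})/q$, and the determinant comes from the block-triangular form. Your explicit verification that $\Lambda_\varphi=\Lambda_{\rm sys}\boldsymbol{\Pi}^{\mathsf T}$ simply spells out the permutation step that the paper states in one line.
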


\begin{proof}
In systematic order,
$(\mathbf u,\mathbf v)\mathbf B_{\rm sys}
=(\mathbf u,\mathbf u\widetilde{\mathbf P}_{\rm sys}+q\mathbf v)$, whose residue belongs to
$\cC_{\rm sys}$.  Conversely, if
$(\mathbf a,\mathbf b)\in\Z^k\times\Z^{M-k}$ has residue in $\cC_{\rm sys}$, then
$\mathbf b-\mathbf a\widetilde{\mathbf P}_{\rm sys}$ is divisible coordinatewise by $q$.  Taking
$\mathbf u=\mathbf a$ and
$\mathbf v=(\mathbf b-\mathbf a\widetilde{\mathbf P}_{\rm sys})/q$ gives
$(\mathbf a,\mathbf b)=(\mathbf u,\mathbf v)\mathbf B_{\rm sys}$.
Multiplication by $\boldsymbol{\Pi}^{\mathsf T}$ restores
the original coordinates.  The determinant follows from the block-triangular
form.
\end{proof}

The multiplication by $\boldsymbol{\Pi}^{\mathsf T}$ restores the original
$(\tau,\xi)$ evaluation-fiber order used by the block-diagonal embedding.

\subsection{Fiberwise Partial Hadamards}

We use the block-Hadamard geometry introduced in the direct $p>2$ SVP
reduction of Hair and Sahai~\cite{hair-sahai-2511}.  The underlying
Hadamard matrices are the classical Sylvester construction.

For every power of two \(L\), the Sylvester Hadamard matrix
\cite{sylvester1867thoughts}
\(\mathbf H_L\in\{\pm1\}^{L\times L}\) has mutually orthogonal rows:
\[
  \mathbf H_L\mathbf H_L^{\mathsf T}=L\mathbf I_L.
\]
Let $L$ be the least power of two with $L\ge q$, so $q\le L<2q$.  Let
$\mathbf U\in\{\pm1\}^{q\times L}$ consist of the first $q$ rows of $\mathbf H_L$.
The selected rows remain orthogonal, so
\begin{equation}\label{eq:partial-hadamard}
  \mathbf U\mathbf U^{\mathsf T}=L\mathbf I_q.
\end{equation}
In the original evaluation-fiber order define
\begin{equation}\label{eq:embedding-and-B}
  \mathbf E=\mathbf I_g\otimes \mathbf U\in\Z^{M\times gL},
  \qquad
  \mathbf B=\mathbf B_A\mathbf E.
\end{equation}
Thus $\mathbf E$ is block diagonal with one copy of $\mathbf U$ for each evaluation
fiber.  The map $\mathbf z\mapsto \mathbf z\mathbf E$ is injective because, blockwise,
$(\mathbf v\mathbf U)\mathbf U^{\mathsf T}=L\mathbf v$.  Hence $\mathbf B$ has full row rank
\begin{equation}\label{eq:rank-and-ambient}
  \operatorname{rank}(\mathbf B)=M,
  \qquad M\le m_B=gL<2M.
\end{equation}

\begin{lemma}[Hadamard energy bounds]\label{lem:hadamard-energy}
For every $\mathbf v\in\R^q$ and every $2<p\le\infty$,
\begin{equation}\label{eq:one-block-hadamard}
  \|\mathbf v\mathbf U\|_p\ge L^{1/p}\|\mathbf v\|_2,
\end{equation}
where $1/\infty=0$.  If
$\mathbf z=(\mathbf z^{(1)},\ldots,\mathbf z^{(g)})\in\R^M$ is split into evaluation-fiber blocks, then
\begin{equation}\label{eq:all-block-hadamard}
  \|\mathbf z\mathbf E\|_p
  \ge L^{1/p}g^{1/p-1/2}\|\mathbf z\|_2.
\end{equation}
\end{lemma}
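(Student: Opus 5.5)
The plan is to prove \eqref{eq:one-block-hadamard} by combining the orthogonality identity \eqref{eq:partial-hadamard} with the norm comparison of Lemma~\ref{lem:norm-comparison}. Then \eqref{eq:all-block-hadamard} follows by applying the one-block bound fiber by fiber and using Lemma~\ref{lem:norm-comparison} a second time, now in dimension $g$.

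\emph{One block.}
For $\mathbf v\in\R^q$, put $\mathbf y=\mathbf v\mathbf U\in\R^L$. Orthogonality gives
\[
  \|\mathbf y\|_2^2=\mathbf v\mathbf U\mathbf U^{\mathsf T}\mathbf v^{\mathsf T}=L\|\mathbf v\|_2^2 .
\]
Apply Lemma~\ref{lem:norm-comparison} with $r=2\le t=p$ in dimension $k=L$. This yields
\[
  \|\mathbf y\|_2\le L^{1/2-1/p}\|\mathbf y\|_p ,
\]
and the same holds for $p=\infty$ with $1/\infty=0$. Combining the two displays,
\[
  \|\mathbf y\|_p\ge L^{1/p-1/2}\sqrt L\,\|\mathbf v\|_2=L^{1/p}\|\mathbf v\|_2 .
\]
This is \eqref{eq:one-block-hadamard}.

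\emph{All blocks.}
Since $\mathbf E$ is block diagonal, $\mathbf z\mathbf E$ is the concatenation of the vectors $\mathbf z^{(k)}\mathbf U$. Define $\mathbf a\in\R^g$ by $a_k=\|\mathbf z^{(k)}\mathbf U\|_p$.

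For finite $p$, the $\ell_p$ norm of a concatenation equals the $\ell_p$ norm of the vector of block norms, so $\|\mathbf z\mathbf E\|_p=\|\mathbf a\|_p$. For $p=\infty$ the same identity holds with maxima in place of sums.

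By the one-block bound, $a_k\ge L^{1/p}b_k$ coordinatewise, where $b_k=\|\mathbf z^{(k)}\|_2$. The $\ell_p$ norm is monotone on nonnegative vectors, so
\[
  \|\mathbf a\|_p\ge L^{1/p}\|\mathbf b\|_p .
\]
Lemma~\ref{lem:norm-comparison} in dimension $g$ gives $\|\mathbf b\|_2\le g^{1/2-1/p}\|\mathbf b\|_p$. Finally, $\|\mathbf b\|_2=\|\mathbf z\|_2$ because the blocks partition the coordinates. Chaining these inequalities yields \eqref{eq:all-block-hadamard}.

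There is no real obstacle. The only point needing care is the exponent direction: we use the upper half of Lemma~\ref{lem:norm-comparison}, $\|\cdot\|_2\le k^{1/2-1/p}\|\cdot\|_p$, in both dimensions $L$ and $g$, together with the uniform treatment of $p=\infty$.
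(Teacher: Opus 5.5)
Your proof is correct and follows the paper's argument essentially step for step. For one block you combine $\mathbf U\mathbf U^{\mathsf T}=L\mathbf I_q$ with the $\ell_2$-versus-$\ell_p$ comparison in dimension $L$; for all blocks you apply that bound fiberwise and then use a second comparison in dimension $g$ on the vector of block Euclidean norms, with $p=\infty$ handled by the same chain.
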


\begin{proof}
Equation~\eqref{eq:partial-hadamard} gives
\(\|\mathbf v\mathbf U\|_2=\sqrt L\|\mathbf v\|_2\).  Applying
Lemma~\ref{lem:norm-comparison} in dimension \(L\) gives the direction
explicitly:
\[
  \|\mathbf v\mathbf U\|_p
  \ge L^{1/p-1/2}\|\mathbf v\mathbf U\|_2
  =L^{1/p}\|\mathbf v\|_2.
\]
This is
\eqref{eq:one-block-hadamard}.  Summing its $p$th powers over blocks and
applying the norm comparison to the $g$-vector
\((\|\mathbf z^{(b)}\|_2)_b\) proves \eqref{eq:all-block-hadamard}.  For
\(p=\infty\),
the same proof reads
$\max_b\|\mathbf z^{(b)}\mathbf U\|_\infty\ge
g^{-1/2}\|\mathbf z\|_2$.
\end{proof}

The lattice represented by the final basis is therefore exactly
\begin{equation}\label{eq:final-lattice-description}
  \cL(\mathbf B)=\{\mathbf z\mathbf E:\mathbf z\in\Lambda_\varphi\}.
\end{equation}

\section{The Rational-Norm Core Theorem}\label{sec:core-theorem}

\begin{theorem}[Rational-norm core]\label{thm:rational-core}
Fix a rational $p>2$ and an integer $c\ge4$ for which the parameter
$\chi$ in \eqref{eq:parameter-family} is positive.  Apply the code and lattice
constructions of Sections~\ref{sec:code}--\ref{sec:lattice}, with parameters
as in \eqref{eq:parameter-family}--\eqref{eq:q-range}.  There is a
deterministic polynomial-time Karp reduction from $\ThreeSAT$ to
\[
  M^{\eps_{\vartheta,c}}\text{-}\GapSVP_p,
  \qquad
  \eps_{\vartheta,c}=\frac\chi D=\frac\chi{4c+7}>0.
\]
On a nontrivial instance the output basis has rank $M$ and ambient
dimension $m_B\in[M,2M)$.  Its radius is an exactly encoded integer, and the
YES and NO distances are separated by the strict factor
$M^{\eps_{\vartheta,c}}$.
\end{theorem}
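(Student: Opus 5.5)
The reduction outputs the basis $\mathbf B=\mathbf B_A\mathbf E$ of \eqref{eq:embedding-and-B} together with an integer radius $r$ chosen slightly above $r_0=(RL)^{1/p}$. Because $p$ is rational, we will encode the comparison through $r^p$ or through a rational upper approximation. The concrete choice is $r=\lceil r_0\rceil$, and the plan is to check that the rounding loss is at most a factor $1+1/r_0$, which the slack in $\chi$ absorbs. Polynomial time is routine. The parameters are polynomial in $S$. The prime $q$ is found by deterministic primality testing, $\cC_\varphi$ is built from parity checks over $\F_q$, Lemma~\ref{lem:construction-a-basis} gives $\mathbf B_A$, and $\mathbf E$ is explicit. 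Rank $M$ and ambient dimension in $[M,2M)$ come from \eqref{eq:rank-and-ambient}. Trivial formulas, those emptied by preprocessing, are mapped to fixed YES and NO instances.

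\emph{Completeness.} Let $\sigma$ satisfy $\varphi$. Lemma~\ref{lem:code-completeness} shows that the $0/1$ vector $\mathbf x^\sigma$ lies in $\Lambda_\varphi$. Each of its $R$ nonempty fibers maps under $\mathbf U$ to a $\pm1$ row of length $L$, and all other fibers map to $0$. Hence $\|\mathbf x^\sigma\mathbf E\|_p=(RL)^{1/p}=r_0\le r$, and so $\lambda_1^{(p)}\le r$.

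\emph{Soundness.} Suppose $\varphi$ is unsatisfiable, and let $0\ne\mathbf z\in\Lambda_\varphi$ with residue $\overline{\mathbf z}$. We split into three cases.
\begin{enumerate}
\item[(i)] If $\overline{\mathbf z}=0$, some coordinate satisfies $|z|\ge q$. Lemma~\ref{lem:hadamard-energy} applied to that fiber gives $\|\mathbf z\mathbf E\|_p\ge L^{1/p}q$, so the ratio to $r_0$ is at least $q/R^{1/p}\ge q^{1-1/p}S^{-1/p}$. This is a large power of $M$.
\item[(ii)] If $\wt(\overline{\mathbf z})>S_0$, then $\|\mathbf z\|_2^2>S_0$. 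The all-block bound \eqref{eq:all-block-hadamard} then gives a ratio of at least $2M^{\delta/2}(R/g)^{\vartheta}\ge 2\cdot 8^{-\vartheta}M^{\delta/2}$.
\item[(iii)] Otherwise Proposition~\ref{prop:sparse-dichotomy} says some nonempty fiber has at least $T+2$ nonzero integer entries, so its $\ell_2$ norm is at least $\sqrt{T+2}$. The one-block bound \eqref{eq:one-block-hadamard} gives a ratio of at least $\sqrt{T}/R^{1/p}$.
\end{enumerate}

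The main work, and the likely obstacle, is converting these three ratios into the strict inequality ratio $>M^{\eps_{\vartheta,c}}\cdot(r/r_0)$ using only $S$-powers. In case (iii) we use $T=4S^{2c+1}$ and $R<Sq<2S^{Q+1}=2S^{2c+4}$. This gives $\sqrt T/R^{1/p}\ge 2S^{c+1/2}(2S^{2c+4})^{-1/p}$. Writing $1/p=1/2-\vartheta$, this is at least a constant times $S^{(2c+4)\vartheta-3/2}$, which exceeds $S^{\chi+1/4}$ up to constants since $\chi\le(2c+4)\vartheta-7/4$. Meanwhile $M^{\eps}<(40S^D)^{\chi/D}=40^{\chi/D}S^{\chi}$. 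The spare factor $S^{1/4}\ge 102^{1/4}>3$ must cover $40^{\chi/D}$, the constant factors from $2^{-1/p}$, and the rounding factor $1+1/r_0$. We will verify each of these numerically using $S\ge102$, $\chi/D<1/8$, and $p>2$.

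In case (ii) we use $M>S^{2Q}/2$. The exponent is $\delta/2=(c-2)/(2D)$, and $M^{\delta/2}\ge (S^{2Q}/2)^{(c-2)/(2D)}$. Comparing with $S^{\chi}$ where $\chi\le(2c-5)/4$ gives a margin of roughly $S^{((c-2)Q/D)-(2c-5)/4}$. Here $Q/D=(2c+3)/(4c+7)$, so $(c-2)(2c+3)/(4c+7)-(2c-5)/4=\tfrac{(c-2)(8c+12)-(2c-5)(4c+7)}{4(4c+7)}=\tfrac{2c+11}{4(4c+7)}>0$. I expect this small positive exponent, together with the leading factor $2\cdot8^{-\vartheta}\ge 2\cdot 8^{-1/2}$, to absorb $40^{\chi/D}$, the factor $2^{-(c-2)/(2D)}$, and the rounding factor. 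This is the tightest constant bookkeeping, so I would carry it out carefully with $S\ge102$.

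Case (i) has enormous slack. With all three cases done, $\lambda_1^{(p)}>M^{\eps_{\vartheta,c}}r$ on NO instances, which completes the Karp reduction.
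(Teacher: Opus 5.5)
Your plan is the paper's proof: completeness from $\mathbf x^\sigma\mathbf E$, the three-way soundness split (zero residue, weight above $S_0$, Proposition~\ref{prop:sparse-dichotomy}), and the ratio bounds $q/R^{1/p}$, $2\cdot8^{-\vartheta}M^{\delta/2}$ and $\sqrt T/R^{1/p}$. Your rounding factor $1+1/r_0$ is even tighter than the paper's $r<2r_0$, and cases (i) and (iii) close as you describe.

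The step that fails as written is the case (ii) comparison you flagged. You bound $M^{\delta/2}$ from below using $M>S^{2Q}/2$, but $M^{\eps_{\vartheta,c}}$ from above using $M<40S^D=40S^{2Q+1}$. Using opposite estimates for the same $M$ discards most of the margin. After your exponent computation, which is correct, you would need
\[
  2\cdot8^{-\vartheta}\,2^{-\zeta/(2D)}\,S^{(2c+11)/(4D)}
  >40^{\chi/D}\,(1+1/r_0).
\]
Let $c\to\infty$ and $p\to\infty$, so that $\chi=(2c-5)/4$ is the active term. Then the exponent of $S$ tends to $1/8$, the left constant tends to $2^{-5/8}$, and the right constant tends to $40^{1/8}$. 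The inequality therefore needs roughly $S>1280$.

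Concretely, take $c=100$, $p=1000$, $S=102$. Your left side is about $1.19$, while $40^{\chi/D}\approx1.56$, and the inequality keeps failing for every $S$ below about $800$ at these parameters. Small genuine formulas have $S$ in this range. The theorem is claimed for every admissible $c$ and every rational $p>2$, so the information $S\ge102$ cannot make your chain close.

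The fix, which is what the paper does, is to avoid powers of $S$ in this case and compare powers of $M$ directly. Since $\zeta/2-\chi\ge1/4$,
\[
  \frac{2\cdot8^{-\vartheta}M^{\delta/2}}{M^{\eps_{\vartheta,c}}}
  =2\cdot8^{-\vartheta}M^{(\zeta/2-\chi)/D}
  \ge2\cdot8^{-\vartheta}M^{1/(4D)}.
\]
Now use only the lower bound $M>S^{2Q}/2>2^{6D}$, which holds because $S\ge102>2^{20/3}$ and $Q\ge11$. This gives $M^{1/(4D)}>2^{3/2}\ge8^{\vartheta}$, so the ratio exceeds $2$. That strictly covers your rounding factor $1+1/r_0$, and even the paper's cruder factor $2$, because the all-block bound was already strict.

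Reserve the upper bound $M<40S^D$ for cases (i) and (iii). There the lower bound on the ratio is genuinely a power of $S$, and your estimates are fine.
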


\paragraph{Proof of Theorem~\ref{thm:rational-core}.}
The next three subsections establish completeness, soundness for every nonzero
lattice vector, and exact polynomial-time implementability, respectively.

\subsection{Completeness}\label{sec:completeness}

Assume $\varphi$ is satisfiable and fix a satisfying assignment $\sigma$.
The integer vector $\mathbf x^\sigma$ from Lemma~\ref{lem:code-completeness} has
residue in $\cC_\varphi$, so it belongs to $\Lambda_\varphi$.  By
\eqref{eq:final-lattice-description}, $\mathbf x^\sigma\mathbf E$ is a nonzero vector in
the output lattice.

Exactly $R$ evaluation fibers of $\mathbf x^\sigma$ contain one standard basis vector.  The
image of that vector is one row of $\mathbf U$, whose entries are all $\pm1$.  Hence
\begin{equation}\label{eq:honest-norm}
  \|\mathbf x^\sigma\mathbf E\|_p^p=RL.
\end{equation}
Put
\begin{equation}\label{eq:algebraic-radius}
  r_0=(RL)^{1/p}.
\end{equation}

The reduction must output finite integer data even if $r_0$ is irrational.
Write the fixed rational norm in lowest terms as
\(p=p_{\rm num}/p_{\rm den}\), with
\(p_{\rm num}>p_{\rm den}>0\), and compute
\begin{equation}\label{eq:integer-radius}
  r=\min\{j\in\Z_{\ge1}:
      j^{p_{\rm num}}\ge(RL)^{p_{\rm den}}\}.
\end{equation}
Fixed-degree exponentiation and binary search compute $r$ exactly.  Since
$r=\lceil r_0\rceil$ and $r_0\ge1$,
\begin{equation}\label{eq:radius-rounding}
  r_0\le r<r_0+1\le2r_0.
\end{equation}
Equations \eqref{eq:honest-norm}--\eqref{eq:radius-rounding} prove
\[
  \lambda_1^{(p)}(\cL(\mathbf B))\le r.
\]
\subsection{Soundness}\label{sec:soundness}

Assume now that $\varphi$ is unsatisfiable.  Consider an arbitrary nonzero
output vector
\[
  \mathbf z\mathbf E\in\cL(\mathbf B),
  \qquad \mathbf z\in\Lambda_\varphi,
\]
and put $\overline{\mathbf z}=\mathbf z\bmod q\in\cC_\varphi$.  Injectivity of $\mathbf E$ implies
$\mathbf z\ne0$.
There are exactly three cases.

\paragraph{Case I: Zero Residue.}

If $\overline{\mathbf z}=0$, then $\mathbf z=q\mathbf w$ for a nonzero integer vector $\mathbf w$.  Some
evaluation-fiber block of $\mathbf z$ has Euclidean norm at least $q$.
Lemma~\ref{lem:hadamard-energy} gives
\[
  \|\mathbf z\mathbf E\|_p\ge qL^{1/p}.
\]
Using $R<Sq$ and $q\ge S^Q$, the ratio to the algebraic radius is
\begin{equation}\label{eq:zero-residue-ratio}
  \frac{\|\mathbf z\mathbf E\|_p}{r_0}
  \ge\frac q{R^{1/p}}
  \ge S^{c_0},
  \qquad
  c_0:=Q-\frac{Q+1}{p}.
\end{equation}

\paragraph{Case II: A Large Nonzero Residue Support.}

Suppose $\overline{\mathbf z}\ne0$ and $\wt(\overline{\mathbf z})>S_0$.  Every nonzero residue coordinate comes
from a nonzero integer coordinate, so
\[
  \|\mathbf z\|_2^2\ge\wt(\overline{\mathbf z})>4M^\delta R.
\]
The all-block bound \eqref{eq:all-block-hadamard}, together with
\eqref{eq:R-over-g}, yields
\begin{align}
  \frac{\|\mathbf z\mathbf E\|_p}{r_0}
  &>
  \frac{L^{1/p}g^{1/p-1/2}\,
        2M^{\delta/2}R^{1/2}}
       {L^{1/p}R^{1/p}} \notag\\
  &=2M^{\delta/2}
    \left(\frac Rg\right)^{1/2-1/p}
   =2M^{\delta/2}\left(\frac Rg\right)^\vartheta
  \ge2\,8^{-\vartheta}M^{\zeta/(2D)}.
  \label{eq:large-support-ratio}
\end{align}

\paragraph{Case III: A Small Nonzero Residue Support.}

It remains that $0<\wt(\overline{\mathbf z})\le S_0$.  Proposition
\ref{prop:sparse-dichotomy} shows that every nonempty evaluation fiber
contains at least $T+2$ nonzero residues.  Any such integer block has Euclidean norm at
least $\sqrt{T+2}$, and therefore
\[
  \|\mathbf z\mathbf E\|_p\ge L^{1/p}\sqrt{T+2}.
\]
Since $R<Sq<2S^{Q+1}$,
\begin{align}
  \frac{\|\mathbf z\mathbf E\|_p}{r_0}
  &\ge\frac{L^{1/p}\sqrt{T+2}}
            {L^{1/p}R^{1/p}}
   =\frac{\sqrt{T+2}}{R^{1/p}} \notag\\
  &>2^{1-1/p}S^{\,c+1/2-(Q+1)/p}
   =2^{1-1/p}S^{c_H}.
  \label{eq:heavy-fiber-ratio}
\end{align}
Here
\[
 c_H:=c+\frac12-\frac{Q+1}{p}
      =(2c+4)\vartheta-\frac32,
\]
\(\sqrt{T+2}>\sqrt T=2S^{c+1/2}\), and
\(R^{1/p}<(2S^{Q+1})^{1/p}\).  This ratio is the full-scale counterpart of
the Hadamard bound for a single evaluation fiber.

\paragraph{Comparison with the Claimed Gap.}

The rounding bound $r<2r_0$ turns
\eqref{eq:zero-residue-ratio}, \eqref{eq:large-support-ratio}, and
\eqref{eq:heavy-fiber-ratio} into the respective strict lower bounds
\begin{equation}\label{eq:post-rounding-three}
  \frac12S^{c_0},
  \qquad
  8^{-\vartheta}M^{\zeta/(2D)},
  \qquad
  2^{-1/p}S^{c_H}.
\end{equation}
The parameter definitions give
\begin{equation}\label{eq:soundness-margins}
  c_0-\chi\ge c+\frac{11}{4},
  \qquad
  \frac\zeta2-\chi\ge\frac14,
  \qquad
  c_H-\chi\ge\frac14,
  \qquad
  0<\frac\chi D<\frac18.
\end{equation}
Moreover,
\begin{equation}\label{eq:rank-gap-bounds}
  M<40S^D,
  \qquad
  M>\frac{S^{2Q}}2>2^{6D}.
\end{equation}
For the last comparison, $S\ge102>2^{20/3}$, while $Q\ge11$ and
$D=2Q+1$.  Hence
\[
  \frac{S^{2Q}}2>2^{40Q/3-1}>2^{12Q+6}=2^{6D}.
\]
The upper bound implies
\(M^{\chi/D}<40^{1/8}S^\chi<2S^\chi\).  The first margin in
\eqref{eq:soundness-margins} therefore handles the zero-residue branch.  For
the large-support branch,
\[
  8^{-\vartheta}M^{(\zeta/2-\chi)/D}
  \ge 8^{-\vartheta}M^{1/(4D)}>1,
\]
because \(M^{1/(4D)}>2^{3/2}>8^\vartheta\).  Finally,
\(2^{-1/p}S^{c_H-\chi}>2\), since \(S>64\), so the branch with a dense
evaluation fiber also
exceeds \(2S^\chi\).  Thus every lower bound in
\eqref{eq:post-rounding-three} is strictly greater than
\(M^{\chi/D}\).

The three cases exhaust every nonzero lattice vector, so
\[
  \lambda_1^{(p)}(\cL(\mathbf B))>M^{\eps_{\vartheta,c}}r.
\]
This proves the mathematical part of Theorem~\ref{thm:rational-core}.
\subsection{Exact Encoding and Deterministic Running Time}\label{sec:complexity}

We complete the proof of Theorem~\ref{thm:rational-core} by verifying that
the construction is a genuine polynomial-time Karp reduction.

For a well-formed explicit $3$CNF encoding, both the number of variables and
the number of clause occurrences are at most the original input bit length
$s$.  Consequently
\[
  100+s\le S=100+s+n+m
  \le100+3s.
\]
Thus it is enough to bound every list, matrix, arithmetic operation, and
output bit length by a fixed polynomial in $S$.

For fixed $p$ and $c$, the quantities $q,T,\kappa,M$, and $L$ are
$S^{O_{p,c}(1)}$ and have $O_{p,c}(\log S)$-bit descriptions.  Enumerating
the interval in \eqref{eq:q-range} and applying deterministic primality
testing~\cite{aks} finds $q$ in polynomial time.  The number of table types,
evaluation points, moments, coordinates, and linear constraints is also
$S^{O_{p,c}(1)}$.

Reed--Solomon membership is imposed by an evaluation matrix and a
parity-check basis.  Finite-field Gaussian elimination constructs the code
generator, its systematic form, and the Construction-A basis in polynomial
bit time.  The algebraic nodes, support factors, and state modules are
soundness witnesses.  The reduction itself performs only finite-field and
integer arithmetic.

The Sylvester rows are explicit, and every entry of
$\mathbf B=\mathbf B_A\mathbf E$ has magnitude
at most $q^2$.  Thus the dense basis has polynomially many entries of
polynomial bit length.  If $p=p_{\rm num}/p_{\rm den}$ in lowest terms, a
binary search using fixed-degree integer powers computes the radius in
\eqref{eq:integer-radius}.  It also has $O_{p,c}(\log S)$ bits.

Finally, parsing and preprocessing use three fixed exceptional branches.  On
a well-formed formula, the construction tests for an empty clause before
testing whether the clause list is empty.
\begin{itemize}
  \item If the input string is not a well-formed encoding of a $3$CNF
  formula, output $\mathbf B=(2)$ and $r=1$.
  \item If an empty clause is present, output $\mathbf B=(2)$ and $r=1$.
  \item If no clauses remain, output $\mathbf B=(1)$ and $r=1$.
\end{itemize}
These exceptional outputs have rank $M=1$, so the factor
$M^{\eps_{\vartheta,c}}$ equals $1$.
The first two branches use a lattice of shortest-vector length $2>1$.  The
last uses one of length $1\le1$.  All promises, including strict NO
soundness, are correct.  This also makes the reduction a total map on binary
strings under the standard convention that malformed encodings are outside
$\ThreeSAT$.

This completes the proof of Theorem~\ref{thm:rational-core}.

\section{Completing the Parameter Range}\label{sec:parameter-range}

It remains to optimize the constant exponent, pass from rational to real
finite norms, and handle $p=\infty$.  We collect these routine steps here.

\subsection{Finite Rational Norms}\label{sec:optimization}

For $0<\vartheta<1/2$, define
\begin{equation}\label{eq:F-definition}
  F(\vartheta)=\min\left\{\frac\vartheta2,\frac18\right\}.
\end{equation}
The two candidate exponents supplied by Theorem~\ref{thm:rational-core} are
\[
  \frac{(2c+4)\vartheta-7/4}{4c+7}
  \qquad\text{and}\qquad
  \frac{(2c-5)/4}{4c+7}.
\]
As $c\to\infty$, they converge respectively to $\vartheta/2$ and $1/8$.
Thus
\[
  \eps_{\vartheta,c}\longrightarrow F(\vartheta).
\]
Consequently, for every rational $p>2$ and every
\[
  0<\varepsilon<F\left(\frac12-\frac1p\right),
\]
we may choose a finite $c\ge4$ for which $\chi>0$ and
$\eps_{\vartheta,c}>\varepsilon$.  The strict soundness inequality in
Theorem~\ref{thm:rational-core} then gives deterministic NP-hardness for the
factor $M^\varepsilon$.  The two limiting exponents come from the case of a
dense evaluation fiber and the large-support case, respectively.

\subsection{Finite Real Norms}\label{sec:real-norms}

Fix a finite real $p>2$ and
\[
  0<\varepsilon<F\left(\frac12-\frac1p\right).
\]
Choose a rational $\widehat p$ with $2<\widehat p<p$ sufficiently close to
$p$, and put
\[
  \omega=\frac1{\widehat p}-\frac1p.
\]
By continuity of $F$, we can choose $\varepsilon'$ so that
\begin{equation}\label{eq:real-transfer-choice}
  \varepsilon+2\omega<\varepsilon'
  <F\left(\frac12-\frac1{\widehat p}\right).
\end{equation}
Apply the rational case at $\widehat p$ with exponent $\varepsilon'$.  By
\eqref{eq:rank-and-ambient}, $m_B<2M$, while
Lemma~\ref{lem:norm-comparison} gives, for every $\mathbf y\in\R^{m_B}$,
\begin{equation}\label{eq:p-hat-to-p-norm}
  \|\mathbf y\|_p\le\|\mathbf y\|_{\widehat p}
  \le m_B^\omega\|\mathbf y\|_p.
\end{equation}
The YES promise therefore transfers without loss.  On a nontrivial NO
instance, $M\ge2$ and
\[
  m_B^{-\omega}>(2M)^{-\omega}\ge M^{-2\omega},
\]
so rational-norm soundness and \eqref{eq:real-transfer-choice} imply
\[
  \lambda_1^{(p)}(\cL(\mathbf B))
  >M^{\varepsilon'-2\omega}r
  >M^\varepsilon r.
\]
The rank-one exceptional branches transfer identically.  Since the fixed
rational $\widehat p$ determines the radius, its encoding remains exact.

\subsection{The Endpoint \texorpdfstring{$p=\infty$}{p=infinity}}

The algebraic construction is independent of the norm exponent.  For
$p=\infty$, fix $c\ge4$ and take
\begin{equation}\label{eq:infinity-family}
  Q=2c+3,
  \quad D=4c+7,
  \quad T=4S^{2c+1},
  \quad \zeta=c-2,
  \quad \chi=\frac{2c-5}{4}.
\end{equation}
The honest vector is a sum of Hadamard rows in disjoint blocks and has exact
$\ell_\infty$ norm one, so we set $r_\infty=1$ and write
$\eps_{\infty,c}=\chi/D$.

The zero-residue case, the large-support case, and the case of a dense
evaluation fiber give
\begin{align}
  \|\mathbf z\mathbf E\|_\infty&\ge q,\label{eq:infty-zero}\\
  \|\mathbf z\mathbf E\|_\infty&>\frac1{\sqrt2}
      M^{\zeta/(2D)},\label{eq:infty-large}\\
  \|\mathbf z\mathbf E\|_\infty&\ge\sqrt{T+2}>2S^{(2c+1)/2}.
      \label{eq:infty-heavy}
\end{align}
On the other hand, the rank upper bound gives
\[
  M^{\eps_{\infty,c}}
  <40^{\eps_{\infty,c}}S^\chi<2S^\chi,
\]
because $\eps_{\infty,c}<1/8$.  The exponent margins for the first and
third branches are
\[
  Q-\chi=\frac{6c+17}{4},
  \qquad
  \frac{2c+1}{2}-\chi=\frac{2c+7}{4},
\]
so both bounds exceed $M^{\eps_{\infty,c}}$.  For the large-support branch,
$\zeta/2-\chi=1/4$; moreover, $M>S^{2Q}/2>2^{6D}$ implies
$M^{1/(4D)}>2^{3/2}>\sqrt2$, which absorbs the prefactor in
\eqref{eq:infty-large}.  Hence every nonzero vector in a NO instance is
strictly longer than $M^{\eps_{\infty,c}}r_\infty$.  The rank-one
exceptional outputs from Section~\ref{sec:complexity} satisfy the same
$\ell_\infty$ promises without change.

Finally,
\[
  \frac18-\eps_{\infty,c}
  =\frac{17}{8(4c+7)}\longrightarrow0,
\]
so every exponent below $1/8$ is obtained by a finite choice of $c$.  For
finite $p$,
\[
  F\left(\frac12-\frac1p\right)
  =\min\left\{\frac{p-2}{4p},\frac18\right\}.
\]
Together with the rational and real-norm arguments above, this proves
Theorem~\ref{thm:main-introduction}.

\bibliographystyle{alpha}
\bibliography{refs.bib}

\end{document}